\documentclass[a4paper,12pt]{article}

\hyphenation{ap-prox-ima-bil-ity}

\usepackage[utf8]{inputenc}

\usepackage{geometry}

\usepackage{pgf}
\usepackage{nicefrac}
\usepackage{booktabs}
\usepackage{enumitem}
\usepackage{bm}
\usepackage{fullpage}
\usepackage{tikz}
\usetikzlibrary{automata,arrows,calc,chains,decorations.pathreplacing,decorations.text,decorations.pathmorphing,external}


\pgfdeclarelayer{edgelayer}
\pgfdeclarelayer{nodelayer}
\pgfsetlayers{edgelayer,nodelayer,main}

\tikzset{vertex/.style={draw,shading=ball,ball color=black,circle, inner sep=0pt, minimum size=5pt}}
\tikzset{vertex2/.style={draw,shading=ball,ball color=black!25,circle,  inner sep=0pt, minimum size=5pt}}

\makeatletter
\tikzset{
    dot diameter/.store in=\dot@diameter,
    dot diameter=1.25pt,
    dot spacing/.store in=\dot@spacing,
    dot spacing=4pt,
    dots/.style={
        line width=\dot@diameter,
        line cap=round,
        dash pattern=on 0pt off \dot@spacing
    }
}
\makeatother

\usepackage[]{asymptote}
 \usepackage{pgfplots}

\usepackage[%
]{graphicx}

\usepackage[
   centertags, 
   sumlimits,  
   intlimits,  
   namelimits, 
   fleqn,     
]{amsmath} %

\usepackage{amsthm}
\usepackage{amssymb}
\usepackage{mathtools}

\usepackage[T1]{fontenc}
\usepackage{lmodern}
\usepackage[%
	english
]{nomencl}

\definecolor{sectioncolor}{RGB}{0, 0, 0}    
%
\definecolor{textcolor}{RGB}{0, 0, 0}        
%
\definecolor{shadecolor}{gray}{0.90}
\definecolor{pdfurlcolor}{rgb}{0,0,0.6}
\definecolor{pdffilecolor}{rgb}{0.7,0,0}
\definecolor{pdflinkcolor}{rgb}{0,0,0.6}
\definecolor{pdfcitecolor}{rgb}{0,0,0.6}
\colorlet{stringcolor}{green!40!black!100}
\colorlet{commencolor}{blue!0!black!100}

\setcounter{secnumdepth}{2}    
\setcounter{tocdepth}{2}		 


\usepackage{aliascnt}

\usepackage[natbib=true,
	    bibencoding=utf8,
	    sorting=nyt,
	    style=alphabetic,
	    useprefix=true,
            hyperref=true,
            backref=false,
	    backrefstyle=none,
	    alldates=short,
            abbreviate=true,
	    isbn=false,
	    doi=false,
	    firstinits=true,
	    maxbibnames=99,
	    backend=biber]{biblatex}
	    

\DeclareBibliographyCategory{needsurl}

\renewbibmacro*{url+urldate}{%
  \ifcategory{needsurl}{
    \printfield{url}%
    \iffieldundef{urlyear}
      {}
      {\setunit*{\addspace}%
       \printurldate}}
    {}}
    
\AtEveryBibitem{%
  \clearlist{language}%
  \clearlist{publisher}%
  \clearlist{location}%
  \clearname{editor}%
}

\renewbibmacro{in:}{}


\addbibresource[]{CuttingPLG.bib}

\theoremstyle{plain}
 
\newtheorem{theorem}{Theorem}  
 
\newaliascnt{lemma}{theorem}  
\newtheorem{lemma}[lemma]{Lemma}  
\aliascntresetthe{lemma}

\newtheorem*{theorem*}{Theorem}

\newtheorem*{corollary}{Corollary}

\theoremstyle{definition}
\newtheorem{definition}{Definition}[section]

\theoremstyle{remark}

\usepackage[
  colorlinks=true,         
  urlcolor=pdfurlcolor,    
  filecolor=pdffilecolor,  
  linkcolor=pdflinkcolor,  
  citecolor=pdfcitecolor,  %
]{hyperref}

\usepackage{csquotes}
\usepackage[
	english,
]{babel}
\pdfminorversion=5

%
%
%


\renewcommand{\leq}{\leqslant}
\renewcommand{\geq}{\geqslant}


\let\epsilon\varepsilon

\begin{document}

\title{\bf Approximation Complexity of Max-Cut on Power Law Graphs\\[1ex]}
\author{Mikael Gast\thanks{Dept. of Computer Science and the Hausdorff Center for Mathematics, University of Bonn.
    e-mail:{ \texttt{\href{mailto:gast@cs.uni-bonn.de}{gast@cs.uni-bonn.de}}}} \and
	Mathias Hauptmann\thanks{Dept. of Computer Science, University of Bonn.
    e-mail:{ \texttt{\href{mailto:hauptman@cs.uni-bonn.de}{hauptman@cs.uni-bonn.de}}}} \and
	Marek Karpinski\thanks{Dept. of Computer Science and the Hausdorff Center for Mathematics, University of Bonn. Research supported by DFG grants and the Hausdorff grant EXC59-1/2.
    e-mail:{ \texttt{\href{mailto:marek@cs.uni-bonn.de}{marek@cs.uni-bonn.de}}}}}
\date{}
\maketitle

\begin{abstract}
 In this paper we study the MAX-CUT 
 problem on power law graphs (PLGs) with power law exponent $\beta$. 
We prove some new approximability results on that problem.
In particular we show that there exist polynomial time approximation schemes (PTAS) for MAX-CUT on PLGs for the power law exponent $\beta$ in the interval $(0,2)$. 
For $\beta>2$ we show that for some $\epsilon>0$, MAX-CUT is NP-hard to approximate within approximation ratio $1+\epsilon$, 
ruling out the existence of a PTAS in this case. 
Moreover we give an approximation algorithm with improved constant approximation ratio for the case of $\beta>2$. 
\end{abstract}



\section{Introduction}

In the study of large-scale complex networks, a large body of theoretical and practical work is devoted to clustering and partitioning problems \cite{Newman2004,Wu2004,Newman2006a,Dinh2013}.
The aim is to identify and to characterize natural partition structures of existing real world networks, like protein interaction networks, online social networks and parts or layers of the World Wide Web.

Given an undirected graph $G=(V,E)$ the MAX-CUT problem asks for a partition of the vertex set into two parts such as to maximize the number of edges between them.
MAX-CUT is one of the classic $21$ NP-complete problems listed in \cite{Karp1972} and has applications ranging from VLSI design and cluster analysis to statistical physics \cite{Barahona1988}.

In this paper we consider the MAX-CUT problem in the context of large-scale complex networks, more precisely in the context of so called \emph{power law graphs} (PLG).
The distinguishing feature of power law graphs is that their node degree distribution follows a \emph{power law}, that is, the number of nodes of degree $i$ is proportional to $i^{-\beta}$, for some fixed \emph{power law exponent} $\beta >0$.
A power law degree distribution has been observed for a large number and variety of social, information, technological and biological networks \cite{Clauset2009a}.

\paragraph{Previous Results}

In the general setting partition problems like the MAX-CUT (and MAX-BISECTION) problem were shown to be APX-hard.
The greedy algorithm or random assignment algorithm for MAX-CUT was shown to yield a $\nicefrac{1}{2}$-approximation for the problem \cite{Sahni1976}.
Beside some $(\nicefrac{1}{2}+o(1))$-approximation algorithms, the currently best approximation ratio 
for the problem is due to \citeauthor{Goemans1995} in their seminal paper \cite{Goemans1995}.
Using \emph{semidefinite programming} they achieved an $\alpha_{GW}$-approximation algorithm for MAX-CUT, where $\alpha_{GW}\approx 0.879$ is the trigonometric quantity $\alpha_{GW}=\nicefrac{2}{\pi \sin\varTheta}$ for $\varTheta\approx 2.33$.

The APX-hardness of the problem on general instances was shown by \citeauthor{Papadimitriou1991} \cite{Papadimitriou1991}.
Under the assumption $\text{P}\neq \text{NP}$, the first explicit inapproximability bound of $\nicefrac{84}{83}$ was proven by \citeauthor{Bellare1998} \cite{Bellare1998} and later refined to the current best bound of $\nicefrac{17}{16}$ due to \citeauthor{Hastad2001} \cite{Hastad2001}  and \citeauthor{Trevisan2006} \cite{Trevisan2006}.
Assuming that the Unique Games Conjecture (UGC) holds true, \citeauthor{Khot2007} \cite{Khot2007} showed that MAX-CUT is UGC-hard to approximate to within $\alpha_{GW}+\epsilon$.
Thus, assuming the UGC, $\alpha_{GW}$ is the best possible approximation ratio achievable in the general case. 

Furthermore, a lot of interest has been devoted to the case of more restricted graph instances of MAX-CUT.
One of the first results was a polynomial time algorithm for the case of planar graphs \cite{Hadlock1975}. 
In the \emph{metric} variant of MAX-CUT (and MIN-BISECTION), instances are complete graphs where edge  weights are given by the metric distance of the corresponding endpoints.
For general metrics a randomized PTAS for MAX-CUT is due to \citeauthor{FernandezdelaVega2001} \cite{FernandezdelaVega2001} and for the case of tree metrics, a polynomial time algorithm was constructed by \citeauthor{Karpinski2013c} \cite{Karpinski2013c}.
A PTAS for metric MIN-BISECTION was shown by \citeauthor{FernandezdelaVega2004} \cite{FernandezdelaVega2004}, whereas for the general case the existence of a PTAS remains an open question (cf. \cite{Karpinski2002}).

Another important special case is when the corresponding problem instances are \emph{dense}, i.e. for some $\delta>0$, the number of edges is lower bounded by $\delta\cdot n^2$.
\citeauthor{Arora1995} \cite{Arora1995} gave a PTAS for dense MAX-CUT and, more generally, for dense MAX-$k$-CSPs.
The result was extended also to sub-dense instances in \cite{FernandezdelaVega2005,Bjorklund2005} and to dense weighted instances in \cite{FernandezdelaVega2000}.

Regarding lower approximation bounds, a lot of interest has been devoted to the case of MAX-CUT in degree $d$ bounded graphs and $d$-regular graphs.
In a series of papers \cite{Berman1999,Berman2001} \citeauthor{Berman2001} showed, among other results, that the MAX-CUT problem restricted to $3$-regular graphs is NP-hard to approximate to within $\nicefrac{152}{151}$.

\section{Main Results}
In this paper we study the MAX-CUT problem on power law graphs.
Our main results are new and improved upper approximation bounds for the problem.
In particular we show that there exists a polynomial time approximation scheme (PTAS) for MAX-CUT on $(\alpha,\beta)$-power law graphs for $0<\beta<2$.

For the range $\beta<1$, we observe that $(\alpha,\beta)$-power law graphs are {dense} (in the average sense, i.e. the number of edges is $\Omega(n^2)$) and the result of \citeauthor{Arora1995} \cite{Arora1995} can be applied to obtain a PTAS.
For $\beta=1$ the graphs are not dense anymore. For this case we prove that $(\alpha,\beta)$-power law graphs are \emph{core-dense} and use the result of \citeauthor{FernandezdelaVega2005} \cite{FernandezdelaVega2005} yielding a PTAS for the problem.

In the range $1<\beta<2$, none of the above results directly apply because instances are neither dense nor core-dense.
In order to construct a PTAS in this case, we partition the vertex set of power law MAX-CUT instances into two sets of high degree vertices and low degree vertices.
We show that for a suitable choice of the partition parameters the induced subgraph of high degree vertices is asymptotically dense, and at the same time the total number of edges induced by the low degree vertices is small.
Thus, in order to obtain a $\frac{1}{1+\epsilon}$-approximate cut, 
we run the algorithm of \cite{Arora1995} on the subgraph of high degree vertices and afterwards placing the remaining vertices arbitrary.

For $\beta>2$ we show that the Goemans-Williamson algorithm \cite{Goemans1995} can be combined with a preprocessing to yield an improved constant approximation ratio. 
Moreover we show that for $\beta>2$, MAX-CUT cannot be approximated with a constant approximation ratio arbitrary close to $1$. For this purpose we use the lower bound result of \cite{Berman1999,Berman2001}, construct an embedding of low degree graphs into power law graphs and obtain in this way also explicit approximation lower bounds depending on the power law exponent $\beta>2$. 
Moreover, a variant of this construction also proves the NP-hardness of MAX-CUT in power law graphs for the whole range $\beta>0$.

Besides NP-hardness in the exact setting, the status of MAX-CUT in PLGs for $\beta=2$ remains unsettled.
However, we consider the case when $\beta$ is a function of the size of the PLG that converges to $2$ from below.
We call this the \emph{functional case}.
In particular we show that for $\beta_f=2-\frac{1}{f(\alpha)}$, MAX-CUT in $(\alpha,\beta_f)$-PLGs admits a PTAS provided the convergence of $\beta_f$ to $2$ is sufficiently slow, namely for all sublinear functions $f(\alpha)=o(\alpha)$.

\paragraph{Organization of the Paper} 
\autoref{sec:Preliminaries} provides the definition of the $(\alpha,\beta)$-PLG model due to \cite{Aiello2001} and related notations.
In \autoref{sec:Approx} we present our PTAS constructions for MAX-CUT on $(\alpha,\beta)$-PLG for $0<\beta<2$.
The functional case $\beta_f=2-\frac{1}{f(\alpha)}$ is considered in \autoref{sec:functional}.
Furthermore we show  an improved constant approximation ratio for the case $\beta>2$ in \autoref{sec:beta>2}.
Finally, in \autoref{sec:LowerBounds}, we prove APX-hardness of the problem for the case $\beta>2$ and NP-hardness for the whole range $\beta>0$.


%
%
%

\section{Preliminaries}\label{sec:Preliminaries}

In this section we first give the formal definition of $(\alpha,\beta)$-power law graphs. Then we provide notations for 
sizes and volumes of some subsets of the vertex set of a power law graph which we call \emph{intervals}. Later on we will give estimates of these quantities in the analysis of our upper and lower bound constructions for MAX-CUT.
\begin{definition}\cite{Aiello2001}
An undirected multigraph $G=(V,E)$ with self loops is called an $(\alpha,\beta)$ power law graph if the following conditions hold:
\begin{itemize}
\item The maximum degree is $\Delta=\lfloor e^{\alpha\slash\beta}\rfloor$.
\item For $i=1,\ldots , \Delta$, the number $y_i$ of nodes of degree $i$ in $G$ satisfies
      \[y_i = \left\lfloor \frac{e^{\alpha}}{i^{\beta}}\right\rfloor\] 
\end{itemize}
\end{definition}
The following estimates for the number $n$ of vertices of an $(\alpha,\beta)$-power law graph are well known \cite{Aiello2001}:
\[n\approx \left\{\begin{array}{l@{\quad}l}
 \frac{e^{\alpha\slash\beta}}{1-\beta} & \mbox{for $0<\beta <1$,}\\
 \alpha\cdot e^{\alpha} & \mbox{for $\beta =1$,}\\
 \zeta (\beta )\cdot e^{\alpha} & \mbox{for $\beta >1$.}
\end{array}\right.\quad 
m\approx \left\{\begin{array}{l@{\quad}l}
 \frac{1}{2}\frac{e^{2\alpha\slash\beta}}{2-\beta} & \mbox{for $0<\beta <2$,}\\
 \frac{1}{4}\alpha e^{\alpha} & \mbox{for $\beta =2$,}\\
 \frac{1}{2}\zeta (\beta -1)e^{\alpha} & \mbox{for $\beta >2$.}
\end{array}\right.\]
Here $\zeta (\beta)=\sum_{i=1}^{\infty}i^{-\beta}$ is the {\sl Riemann Zeta Function}.

A random model for $(\alpha,\beta)$-power law graphs was given in \cite{Aiello2001} and is constructed in the following way:
\begin{enumerate}
  \item Generate a set $L$ of $\text{deg}_G(v)$ distinct copies of each vertex $v$.
  \item Generate a random matching on the elements of $L$.
  \item For each pair of vertices $u$ and $v$, the number of edges joining $u$ and $v$ in $G$ is equal to the number of edges in the matching of $L$, which join copies of $u$ to copies of $v$.
\end{enumerate}

\begin{figure}[htb]
\centering
 \begin{tikzpicture}
[xscale=1.25, yscale=1.5,
label distance=5pt,
every pin edge/.style={dotted,latex-,in=90,out=-45,shorten <=-4pt},
every pin/.style={dashed,pin distance=1cm}]

\draw[thick,rounded corners=6pt,densely dotted] (-.25,.25) -- (.25,0.25) -- (.25,-1.25) -- (-0.25,-1.25) -- cycle;
\draw[thick,rounded corners=6pt,densely dotted,xshift=1cm] (-0.25,0.25) -- (.25,0.25) -- (.25,-1.25) -- (-0.25,-1.25) -- cycle;

\draw[thick,rounded corners=6pt,densely dotted,xshift=-1.5cm] (4.25,0.25) -- (4.75,0.25) -- (5,-1.25) -- (4,-1.25) -- cycle;
\draw[thick,rounded corners=6pt,densely dotted] (4.25,0.25) -- (4.75,0.25) -- (5,-1.25) -- (4,-1.25) -- cycle;

\draw[thick,rounded corners=6pt,densely dotted] (6.75,0.25) -- (7.25,0.25) -- (7.8,-1.25) -- (6.2,-1.25) -- cycle;
\draw[thick,rounded corners=6pt,densely dotted,xshift=2cm] (6.75,0.25) -- (7.25,0.25) -- (7.8,-1.25) -- (6.2,-1.25) -- cycle;

\node[vertex] (v1) at (0,0) {};
\node[vertex] (v12) at (1,0) {};
\node (d1) at (2,-0.5) {$\dots$};
\node[vertex] (v21) at (3,0) {};
\node[vertex] (v22) at (4.5,0) {};
\node (d2) at (5.7,-0.5) {$\dots$};
\node[vertex] (v31) at (7,0) {};
\node[vertex] (v32) at (9,0) {};
\node (d3) at (10.25,-0.5) {$\dots$};
\node (d4) at (11.25,-0.5) {$\dots$};

\draw[decorate,decoration={brace,raise=12pt}] (v1.north west) -- node[above=12pt,sloped,fill=none,draw=none]{$\mbox{deg}_G(v)=1$} +(2.25,0);
\draw[decorate,decoration={brace,raise=12pt}] (v21.north west) -- node[above=12pt,sloped,fill=none,draw=none]{$\mbox{deg}_G(v)=2$} +(2.9,0);
\draw[decorate,decoration={brace,raise=12pt}] (v31.north west) -- node[above=12pt,sloped,fill=none,draw=none]{$\mbox{deg}_G(v)=3$} +(3.5,0);

\node[vertex2] (v11) at (0,-1) {};

\node[vertex2] (v12) at (1,-1) {};

\node[vertex2] (v211) at (2.75,-1) {};
\node[vertex2] (v212) at (3.25,-1) {};

\node[vertex2] (v221) at (4.25,-1) {};
\node[vertex2] (v222) at (4.75,-1) {};

\node[vertex2] (v311) at (6.5,-1) {};
\node[vertex2] (v312) at (7,-1) {};
\node[vertex2] (v313) at (7.5,-1) {};

\node[vertex2] (v321) at (8.5,-1) {};
\node[vertex2] (v322) at (9,-1) {};
\node[vertex2] (v323) at (9.5,-1) {};

\node[left=5pt of v1] (V) {$V$};
\node[left=5pt of v11] (L) {$L$};

\draw[rounded corners] (v11) -- +(0,-.4)  node (e) {} -|(v211);
\draw[rounded corners]  (v12)  -- +(0,-.5) -| (v311);
\draw[rounded corners]  (v221)  -- +(0,-.4) -| node (s) {}(v222);
\draw[rounded corners]  (v313)  -- +(0,-.4) -| node (m1) {}(v322);
\draw[rounded corners]  (v312)  -- +(0,-.5) -| node (m2) {}(v323);
\draw[rounded corners]  (v212) -- +(0,-.6) -| (v321);

\node[below right = .25 of e] (le) {edge};
\node[below left = .25 of m1] (lm) {multi-edge};
\node[below left = .25 of s] (ls) {self-loop};
\draw (le.west) edge[-latex,out=180,in=-135] (e.north);
\draw (lm.east) edge[-latex,out=0,in=-45] (m1.north);
\draw (lm.east) edge[-latex,out=0,in=-45] (m2.north);
\draw (ls.east) edge[-latex,out=0,in=-45] (s.north);

\end{tikzpicture}
 \label{fig:RandomMatch}
\end{figure}


 Given an $(\alpha,\beta)$ power law graph $G=(V,E)$ with $n$ vertices and maximum degree $\Delta$ and two integers $1\leq a\leq b\leq\Delta$, 
 an \emph{interval} $[a,b]$ is defined as the subset of $V$
 \begin{equation*}
 [a,b] = \{v\in V|a\leq \mbox{deg}_G(v)\leq b\}.
 \end{equation*}
 If $U\subseteq V$ is a subset of vertices, the \emph{volume} $\mbox{vol}(U)$ of $U$ is defined as the sum of node degrees of nodes in $U$.
 We will make use of estimates of sizes and volumes of node intervals in $(\alpha,\beta)$-PLGs.

\section{Approximation Schemes for \texorpdfstring{$\bm{0<\beta<2}$}{0<beta<2}}\label{sec:Approx}
We will now show that for every constant power law exponent $\beta\in (0,2)$, there is a PTAS for the MAX-CUT problem in $(\alpha,\beta )$-PLGs.
It turns out that for $\beta\in (0,1)$ this follows directly from the results in \cite{Arora1995}, since in that case the power law graphs are dense, 
(\autoref{section_4_1}). Recall that a graph $G=(V,E)$ with $n$ vertices 
is called $\delta$-dense if the number of edges satisfies $|E|\geq\delta\cdot n^2$. For $\beta =1$, $(\alpha,1)$-PLGs are not dense anymore. Nevertheless we can 
establish existence of a PTAS by showing that $(\alpha, 1)$-PLGs are \emph{core-dense}, a notion which was introduced in \cite{FernandezdelaVega2005}.

\subsection{The Case \texorpdfstring{$\bm{0<\beta\leq 1}$}{beta <= 1}}\label{section_4_1}
First we consider the case when the power law exponent $\beta$ is strictly less than $1$. In this case, the number $n$ of nodes
is asymptotically equal to $\frac{e^{\alpha\slash\beta}}{1-\beta}$ and the number $m$ of edges satisfies 
$m=\frac{1}{2}\frac{e^{2\alpha\slash\beta}}{2-\beta}$. Thus, in this case the graphs are asymptotically dense.
\begin{theorem}[\cite{Arora1995}]
For every $\delta>0$, there is a PTAS for MAX-CUT in $\delta$-dense graphs.
\end{theorem}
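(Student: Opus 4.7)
The plan is to establish this via the exhaustive sampling technique that Arora, Karger and Karpinski introduced for smooth polynomial integer programs. MAX-CUT can be formulated as a quadratic 0/1 program: maximize $\sum_{(u,v)\in E}(x_u+x_v-2x_ux_v)$ subject to $x_v\in\{0,1\}$. Since $|E|\geq \delta n^2$, any additive-error $\epsilon' n^2$ approximation with $\epsilon'=\epsilon\delta$ translates into a $(1+\epsilon)$-multiplicative approximation of the optimum cut, so it suffices to design an algorithm achieving additive error $\epsilon' n^2$ for arbitrary $\epsilon'>0$.

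First I would separate vertices by degree using a threshold $\tau=\epsilon' n$. Vertices with $\deg(v)<\tau$ are incident to fewer than $\tfrac{1}{2}\epsilon' n^2$ edges in total, so an arbitrary placement of such vertices loses at most $\tfrac{1}{2}\epsilon' n^2$ with respect to any cut. The remaining high-degree vertices have degree $\Omega(\epsilon' n)$, which is exactly the regime where sampling concentration is effective.

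The core step is the sampling procedure. I would draw a uniformly random subset $S\subseteq V$ of size $s=\Theta(\log n/\epsilon'^2)$ and enumerate all $2^s=n^{O(1/\epsilon'^2)}$ bipartitions $(S_0,S_1)$ of $S$; this gives a polynomial number of candidates. For each candidate and each high-degree vertex $v\notin S$, estimate the number $\widehat{d}_i(v)$ of neighbors of $v$ in side $i$ by scaling the neighbors in $S_i$ by $n/s$; then assign $v$ to the side that maximizes its estimated contribution $\widehat{d}_{1-i}(v)$ to the cut. Placement of low-degree vertices is arbitrary. Return the best cut among all candidates.

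For the analysis, fix an optimal cut $(A^*,B^*)$ and condition on the event that $S$ simultaneously estimates $d_{A^*}(v)$ and $d_{B^*}(v)$ within additive error $\epsilon'n$ for every high-degree $v$; by Chernoff bounds combined with a union bound over the $n$ vertices, a sample of size $\Theta(\log n/\epsilon'^2)$ achieves this with high probability. When the enumeration hits the partition $(S\cap A^*,S\cap B^*)$, the greedy rule makes a locally optimal decision up to a per-vertex slack of $O(\epsilon' n)$, and summing over $n$ vertices yields total loss $O(\epsilon' n^2)$ compared with the optimum, as required. The principal obstacle is precisely controlling this uniform concentration over all high-degree vertices while simultaneously absorbing the contribution of low-degree vertices into the additive error; once the degree split is chosen correctly, both error terms match the target $\epsilon' n^2$, and rescaling $\epsilon'=\epsilon\delta$ converts the additive guarantee into the desired PTAS.
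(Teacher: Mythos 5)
The paper does not prove this statement at all: it is imported verbatim from Arora--Karger--Karpinski, so the only fair comparison is with the known AKK proof. Your skeleton matches theirs up to a point --- the quadratic $0/1$ formulation, the reduction of a multiplicative PTAS to an additive $\epsilon' n^2$ guarantee with $\epsilon'=\epsilon\delta$, and the exhaustive sampling step (sample of size $\Theta(\log n/\epsilon'^2)$, enumeration of its $2^{s}=n^{O(1/\epsilon'^2)}$ bipartitions, Chernoff-plus-union-bound estimates $\widehat{d}_i(v)$ accurate to $\pm\epsilon' n$) are all exactly the AKK setup. The gap is in your final step: one-shot greedy placement of every vertex according to its estimated neighbor counts in the \emph{optimal} partition, with the loss bounded by summing an $O(\epsilon' n)$ per-vertex slack. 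That argument is unsound, and $K_n$ is a concrete counterexample: there the optimum is a balanced bipartition with value about $n^2/4$, every vertex has $d_{A^*}(v)\approx d_{B^*}(v)\approx n/2$, and since all vertices share the same sample $S$, a small sampling fluctuation (say $|S\cap A^*|$ slightly below $|S\cap B^*|$) makes \emph{every} vertex estimate more neighbors in $B^*$ and hence get placed on the same side $A$, producing a cut of value near $0$. Each vertex is ``locally optimal up to $O(\epsilon' n)$'' in your sense, yet the total loss is $\Theta(n^2)$. The flaw is that local optimality is measured against the optimal assignment $x$ of the other endpoints, while the returned cut is evaluated with both endpoints at their new assignments $y$; the discrepancy $\sum_{(u,v)\in E}\bigl([x_u\neq y_v]-[y_u\neq y_v]\bigr)$ is an interaction term that your summation silently drops, and it can be of order $n^2$.

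This is precisely why AKK do \emph{not} finish greedily. In their proof the estimates $r_v\approx|N(v)\cap A^*|$ are used to \emph{linearize} the quadratic program: one solves a linear program in fractional variables $x_v\in[0,1]$ with constraints forcing $\sum_{u\in N(v)}x_u$ to agree with $r_v$ up to $O(\epsilon' n)$ for all $v$, and then applies randomized rounding; the LP constraints enforce the global consistency that your per-vertex rule lacks. If you want to rescue a greedy-style argument instead, the standard repair (as in Goldreich--Goldwasser--Ron, and in de la Vega's dense MAX-CUT algorithm) is to split $V$ into $O(1/\epsilon')$ groups, draw a \emph{fresh} sample for each group, and assign groups sequentially, so each decision is made against the actual partial assignment and the unaccounted intra-group edges number at most $n^2/\ell\leq\epsilon' n^2$; a hybrid argument over the stages then goes through. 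As written, your proposal is missing one of these mechanisms and does not prove the theorem. (Minor remark: your degree threshold is unnecessary --- the additive $\pm\epsilon' n$ estimation works for all vertices regardless of degree --- though it is harmless.)
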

\begin{corollary}
There exists a PTAS for MAX-CUT in power law graphs with power law exponent $\beta<1$.
\end{corollary}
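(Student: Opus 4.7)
The plan is to derive the density constant directly from the asymptotic formulas for $n$ and $m$ stated just above the theorem, and then invoke the PTAS of \cite{Arora1995} as a black box. Concretely, I would first set $\beta\in(0,1)$ as a fixed constant and compute the ratio $m/n^2$ from the estimates
\[
 n\approx \frac{e^{\alpha/\beta}}{1-\beta},\qquad m\approx \frac{1}{2}\cdot\frac{e^{2\alpha/\beta}}{2-\beta},
\]
which gives $m/n^2\to \dfrac{(1-\beta)^2}{2\,(2-\beta)}$ as $\alpha\to\infty$. Since $\beta<1$ is a fixed constant, this limit is a strictly positive constant $c(\beta)>0$ that does not depend on the size parameter $\alpha$.

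From this I would define the density constant $\delta:=\tfrac{1}{2}c(\beta)=\dfrac{(1-\beta)^2}{4\,(2-\beta)}$ and argue that there is an $\alpha_0=\alpha_0(\beta)$ such that every $(\alpha,\beta)$-PLG with $\alpha\geq\alpha_0$ satisfies $|E|\geq \delta\cdot n^2$, i.e.\ is $\delta$-dense in the sense of the theorem cited from \cite{Arora1995}. The only mild care needed is to absorb the floor functions in the definition of $y_i$ and in $\Delta=\lfloor e^{\alpha/\beta}\rfloor$ into the $\approx$; this is standard, since the floors introduce only lower order additive corrections to the geometric-type sums defining $n$ and $m$.

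Having established $\delta$-density for all sufficiently large instances, I would apply the PTAS of \cite{Arora1995} to obtain, for every $\epsilon>0$, a polynomial time $(1-\epsilon)$-approximation for MAX-CUT on every $(\alpha,\beta)$-PLG with $\alpha\geq\alpha_0$. The finitely many small instances with $\alpha<\alpha_0$ have bounded size (depending only on $\beta$, which is a constant of the problem, not part of the input), so MAX-CUT on them can be solved exactly in constant time, e.g.\ by brute force enumeration. Combining the two cases yields the claimed PTAS.

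The main obstacle is essentially non-existent once the density ratio is computed: the whole argument is a one-line reduction to \cite{Arora1995}, and the only genuine point to check is that the constant $c(\beta)=(1-\beta)^2/(2(2-\beta))$ is bounded away from $0$ uniformly in the instance size $\alpha$, which it is precisely because $\beta<1$ is a fixed constant bounded away from $1$.
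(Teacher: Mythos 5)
Your proposal is correct and follows exactly the paper's route: the paper likewise observes that for fixed $\beta<1$ the estimates $n\approx \frac{e^{\alpha/\beta}}{1-\beta}$ and $m\approx\frac{1}{2}\frac{e^{2\alpha/\beta}}{2-\beta}$ make $(\alpha,\beta)$-PLGs asymptotically dense and then invokes the PTAS of \cite{Arora1995} for $\delta$-dense graphs as a black box. Your explicit computation of the density constant $\delta=\frac{(1-\beta)^2}{4(2-\beta)}$ and the handling of floors and of the finitely many small instances merely spell out details the paper leaves implicit.
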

Now we consider the case when $\beta=1$. 
\begin{definition}\cite{FernandezdelaVega2005}
The \emph{core-strength} of a weighted $r$-uniform hypergraph $H=(V,E)$ with $|V|=n$ nodes given by an $r$-dimensional 
tensor $A\colon V\times\ldots\times V\to {\mathbb R}$ 
is 
\[ \left (\sum_{i=1}^nD_i \right )^{r-2}\sum_{i_1,\ldots , i_r\in V}\frac{A_{i_1,\ldots , i_r}^2}{\prod_{j=1}^r(D_{i_j}+\bar{D})},\]
where 
\[D_i=\sum_{i_2,\ldots , i_r\in V}A_{i,i_2,\ldots , i_r},\:\: \bar{D}=\frac{1}{n}\sum_{i=1}^nD_i\]
A class of weighted $r$-uniform hypergraphs is \emph{core-dense} if the core-strength is $O(1)$.
\end{definition}  
In particular, the class ${\mathcal C}$ of unweighted graphs is core-dense if 
\[\sup_{G=(V,E)\in {\mathcal C}}\sum_{\{i,j\}\in E}\frac{1}{(D_i+\bar{D})(D_j+\bar{D})}\: =\: O(1)\]
where $D_i$ is the degree of node $i$ in $G$ and $\bar{D}=\frac{1}{n}\sum_{i}D_i$ is the average-degree of $G$.
\begin{theorem}
For $\beta =1$, the class of $(\alpha,\beta )$-Power Law Graphs is core-dense.
\end{theorem}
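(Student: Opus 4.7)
The goal is to show that
\[ S(G)\ :=\ \sum_{\{u,v\}\in E}\frac{1}{(D_u+\bar D)(D_v+\bar D)}\ =\ O(1)\]
uniformly over all $(\alpha,1)$-PLGs $G$. The approach is to regroup edges by the degrees of their endpoints and exploit the fact that for $\beta=1$ the per-class stub mass $i\cdot y_i\approx e^\alpha$ is essentially independent of $i$.

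First I would collect the relevant parameters using the formulas from \autoref{sec:Preliminaries}: for $\beta=1$, $n\approx\alpha e^\alpha$, $m\approx\tfrac12 e^{2\alpha}$, $\Delta=\lfloor e^\alpha\rfloor$, and consequently $\bar D=2m/n\approx e^\alpha/\alpha$. Writing $V_i=\{v:D_v=i\}$ and $E_{ij}=|E(V_i,V_j)|$, I would rewrite
\[ S(G)\ =\ \sum_{1\le i\le j\le\Delta}\frac{E_{ij}}{(i+\bar D)(j+\bar D)}.\]
The random-matching construction that is part of the $(\alpha,\beta)$-PLG model yields $\mathbb E[E_{ij}]\le (iy_i)(jy_j)/(2m)$, which for $\beta=1$ is a uniform constant in $(i,j)$. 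A Chernoff-type concentration argument on the matching upgrades this to a pointwise deterministic bound.

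Next I would split the summation range at $i=\bar D$ into four blocks. The low-low block ($i,j\le\bar D$) contributes at most $(\text{number of low-low edges})/\bar D^2$, which stays $O(1)$ thanks to $\sum_{i\le\bar D}iy_i\le 2m$. The two mixed blocks are controlled symmetrically using $1/(i+\bar D)\le 1/\bar D$ on the low side. The high-high block is the delicate one: one uses $(i+\bar D)(j+\bar D)\ge ij$ and combines the sparse stub count in the high-degree regime with the decay $1/(ij)$ to suppress logarithmic factors.

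The main obstacle is precisely this high-high block. A naive separable bound gives $\bigl(\sum_i 1/(i+\bar D)\bigr)^2=\Theta(\log^2\alpha)$, which is too large. To recover a genuine $O(1)$ bound one has to use the actual random-matching counts rather than their separable expectations, so that the two would-be logarithmic factors cancel against the normalization $1/(4m)$ via the global identity $\sum_i iy_i=2m$. This is the style of core-density argument carried out in \cite{FernandezdelaVega2005}, and once it is in place the corollary that $(\alpha,1)$-PLGs admit a PTAS for MAX-CUT will follow immediately from their PTAS for core-dense instances.
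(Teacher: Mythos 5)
There is a genuine gap, and it sits exactly where you located the difficulty. Two steps fail. First, the upgrade ``a Chernoff-type concentration argument on the matching upgrades this to a pointwise deterministic bound'' is not available: the paper defines an $(\alpha,\beta)$-PLG by its degree sequence alone, so the class whose core-strength must be uniformly bounded contains \emph{every} multigraph realizing that sequence, and concentration for the random matching says nothing about the supremum over the class. (Indeed it is false, not merely unproven: pairing the $i\,y_i\approx e^{\alpha}$ stubs of each degree class internally yields a legitimate $(\alpha,1)$-PLG with $\approx e^{\alpha}/2$ edges inside each class, and already the classes $i\leq\bar{D}$ give $S(G)\geq\frac{e^{\alpha}}{2}\sum_{i\leq\bar{D}}(i+\bar{D})^{-2}=\Theta(\alpha)$.) Second, even if you retreat to the random model, the cancellation you invoke for the high--high block does not exist. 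For $\beta=1$ one has $i\,y_i\approx e^{\alpha}$ for all $i$ and $2m\approx e^{2\alpha}$, so $\E[E_{ij}]\approx(i\,y_i)(j\,y_j)/(2m)\approx 1$ for every pair of classes; the identity $\sum_i i\,y_i=2m$ is already spent in this computation, and no residual $1/(4m)$ normalization is left over. Aggregated over the block (where sums, unlike the mean-$\Theta(1)$ individual $E_{ij}$, do concentrate), the high--high contribution is therefore
\[\sum_{\bar{D}\leq i<j\leq\Delta}\frac{\E[E_{ij}]}{(i+\bar{D})(j+\bar{D})}\approx\frac{1}{2}\Bigl(\sum_{i\geq\bar{D}}\frac{1}{i+\bar{D}}\Bigr)^{2}=\Theta\bigl(\log^{2}\alpha\bigr),\]
so along your route this block genuinely diverges; the separable estimate you dismissed as naive is in fact tight here. (The other blocks are also shakier than claimed: with the same per-pair counts the mixed blocks come to $\Theta(\log\alpha)$, and your low--low bound via $\sum_{i\leq\bar{D}}i\,y_i\leq 2m$ gives only $2m/\bar{D}^{2}=\Theta(\alpha^{2})$, not $O(1)$.)

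The paper's own proof takes a much shorter, purely deterministic route that never passes through pairwise class counts or the random model: it lower-bounds the average degree, $\bar{D}\geq(1+o(1))\frac{e^{\alpha}}{2\alpha}$, and then bounds the core-strength by a single classwise sum,
\[\sum_{i=1}^{e^{\alpha}}\frac{e^{\alpha}}{i}\cdot\frac{1}{\bigl(i+\frac{e^{\alpha}}{2\alpha}\bigr)^{2}}\leq\frac{\alpha^{3}}{e^{\alpha}}=o(1),\]
that is, it charges to degree class $i$ only about $y_i\approx e^{\alpha}/i$ terms of weight $(i+\bar{D})^{-2}$. Note the contrast with your bookkeeping, which attributes to class $i$ its full stub count $i\,y_i\approx e^{\alpha}$: that factor of $i$ in the per-class edge count is precisely where your logarithms (and the $\Theta(\alpha)$ of the adversarial example above) come from. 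Any completion of your argument would have to justify a $y_i$-type, rather than stub-type, count of the relevant edge terms, and the proposal supplies no mechanism for that.
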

\begin{corollary}
For $\beta =1$, there is a PTAS for MAX-CUT in $(\alpha, \beta)$-Power Law Graphs.
\end{corollary}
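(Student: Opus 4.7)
The plan is to show that for the class of $(\alpha,1)$-PLGs the quantity
\[S \;=\; \sum_{\{u,v\}\in E}\frac{1}{(D_u+\bar D)(D_v+\bar D)}\]
is bounded by a constant uniformly in $\alpha$, by a direct calculation that exploits the explicit degree sequence.

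First, I would assemble the relevant parameters from the preliminaries for $\beta=1$: $n \approx \alpha e^{\alpha}$, $m \approx \tfrac{1}{2}e^{2\alpha}$, $\bar D = 2m/n \approx e^{\alpha}/\alpha$, $\Delta = \lfloor e^{\alpha}\rfloor$ and $y_i \approx e^{\alpha}/i$. I then group the edges by the degrees of their endpoints, writing $e_{i,j}$ for the number of edges between a degree-$i$ and a degree-$j$ vertex, so that
\[S \;=\; \sum_{1\leq i\leq j\leq\Delta}\frac{e_{i,j}}{(i+\bar D)(j+\bar D)}.\]

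Next I would bound $e_{i,j}$. In the random-matching model of \cite{Aiello2001} the expected number of edges between a degree-$i$ and a degree-$j$ vertex is approximately $(iy_i)(jy_j)/(2m)$; using this, either via concentration, or via the deterministic dominations $e_{i,j} \leq \min(iy_i, jy_j)$ together with the handshake-type relation $\sum_j e_{i,j} \leq iy_i$, makes the sum separable:
\[S \;\leq\; \frac{C}{2m}\Bigl(\sum_{i=1}^{\Delta}\frac{iy_i}{i+\bar D}\Bigr)^{2} \;=\; \frac{C}{2m}\Bigl(\sum_{i=1}^{\Delta}\frac{e^{\alpha}}{i+\bar D}\Bigr)^{2}.\]
I would then evaluate the inner sum by a split at $i=\bar D$ and integral estimates: the range $i<\bar D$, in which $i+\bar D$ is essentially $\bar D$, contributes of order $\bar D\cdot e^{\alpha}/\bar D=e^{\alpha}$, and the tail $i\geq\bar D$ contributes the same order (up to harmless logarithmic corrections of the form $\ln(\Delta/\bar D)=\ln\alpha$). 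Squaring and dividing by $2m\approx e^{2\alpha}$ then yields $S=O(1)$.

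The main obstacle I anticipate is the passage from the configuration-model expectation to a pointwise bound that holds for every realisation: a crude deterministic bound on edges among low-degree vertices (those with $d<\bar D$) is only $O(\mathrm{vol}(V_{\mathrm{lo}}))=O(e^{2\alpha}/\alpha)$, contributing $O(\alpha)$ to $S$ by itself, which is too weak. The argument must therefore either invoke concentration of $e_{i,j}$ around its mean in the random matching, or exploit the fact that the typical partner of a low-degree copy in the matching is a copy of a high-degree vertex, so that one of the two factors $(D_u+\bar D)$, $(D_v+\bar D)$ is of order $\Delta$ rather than $\bar D$; this is exactly what cancels the extra factor of $\alpha$ and brings the bound down to a constant.
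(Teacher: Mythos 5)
Your outline is the paper's outline---establish that $(\alpha,1)$-PLGs are core-dense and invoke the PTAS of \cite{FernandezdelaVega2005}---but the decisive estimate in your proposal does not go through, and the logarithmic correction you dismiss as harmless is exactly what kills it. Since $y_i=\lfloor e^{\alpha}/i\rfloor$ gives $iy_i\approx e^{\alpha}$ for \emph{every} degree class $i$, your inner sum is
\[\sum_{i=1}^{\Delta}\frac{iy_i}{i+\bar{D}}\;\approx\;\sum_{i=1}^{e^{\alpha}}\frac{e^{\alpha}}{i+\bar{D}}\;=\;\Theta\bigl(e^{\alpha}\ln\alpha\bigr),\]
because with $\bar{D}=\Theta(e^{\alpha}/\alpha)$ and $\Delta=\lfloor e^{\alpha}\rfloor$ the range $\bar{D}\leq i\leq\Delta$ alone contributes $e^{\alpha}\ln(\Delta/\bar{D})=\Theta(e^{\alpha}\ln\alpha)$. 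Squaring and dividing by $2m=\Theta(e^{2\alpha})$ yields $S=\Theta\bigl((\ln\alpha)^{2}\bigr)$, which diverges with $\alpha$; this is not the $O(1)$ that the definition of core-density requires. Note that no concentration argument can rescue this, since what diverges is already the configuration-model \emph{expectation} of $S$---your fallback of invoking concentration of $e_{i,j}$ around its mean therefore proves at best a $\Theta((\ln\alpha)^{2})$ bound, not a constant. Separately, the deterministic facts you cite ($e_{i,j}\leq\min(iy_i,jy_j)$ and the handshake relation) do not yield the separable form with prefactor $1/(2m)$; that form is purely the random-model heuristic, whereas the theorem as stated quantifies over all multigraphs with the prescribed degree sequence. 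Indeed, an adversarial layout (pair up the degree-$i$ vertices within each class by multi-edges of multiplicity $i$) realises $S=\Theta(\alpha)$, matching the obstacle you correctly identify at the end but showing that a route through class-pair edge counts cannot reach $O(1)$ without substantially more input.

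For comparison, the paper's proof avoids your bookkeeping entirely: it computes $\bar{D}=(1+o(1))\,e^{\alpha}/(2\alpha)$ and then bounds the core-strength by
\[\sum_{i=1}^{e^{\alpha}}\frac{e^{\alpha}}{i}\cdot\frac{1}{\bigl(i+\frac{e^{\alpha}}{2\alpha}\bigr)^{2}}\;\leq\;\frac{\alpha^{3}}{e^{\alpha}}\;=\;o(1),\]
i.e.\ it charges, for each degree value $i$, only $y_i\approx e^{\alpha}/i$ terms of weight $(i+\bar{D})^{-2}$---one per degree-$i$ vertex---where your accounting (and the expectation heuristic) charges the full $iy_i\approx e^{\alpha}$ incidences per class. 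This factor-of-$i$ discrepancy in the per-class edge count is precisely the gap between the paper's $o(1)$ and your $\Theta((\ln\alpha)^{2})$ (or the worst-case $\Theta(\alpha)$). To close your argument you would need a per-realisation justification for charging so few edges to the low-degree classes---for instance a proof that edges incident to low-degree vertices overwhelmingly have a high-degree second endpoint in every admissible graph, which is false in general, or strong enough control of $e_{i,j}$ to eliminate the logarithm---and nothing in the proposal supplies this.
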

\emph{Proof of the Theorem.}
Let $G$ be an $(\alpha, 1)$-PLG. The average-degree of $G$ is asymptotically equal to
\[\bar{D}=\frac{1}{\alpha e^{\alpha}}\sum_{i=1}^{\Delta=e^{\alpha}}\left\lfloor\frac{e^{\alpha}}{i}\right\rfloor\cdot i
\geq\frac{1}{\alpha e^{\alpha}}\left (\sum_{i=1}^{e^{\alpha}}\frac{e^{\alpha}}{i}\cdot i - \sum_{i=1}^{e^{\alpha}} i\right )
 = \frac{1}{\alpha e^{\alpha}}\left (e^{\alpha\cdot 2}-\frac{e^{\alpha}(e^{\alpha}-1)}{2} \right ) = (1+o(1))\frac{e^{\alpha}}{2\alpha}\]
%
Now the core-strength of $G$ is 
\[\sum_{e=\{i,j\}\in E}\frac{1}{(\text{deg}(i)+\bar{D})(\text{deg}(j)+\bar{D})}
\leq \sum_{i=1}^{e^{\alpha}}\frac{e^{\alpha}}{i}\cdot\frac{1}{(i+\frac{e^{\alpha}}{2\alpha})^2}\leq\frac{\alpha^3}{e^{\alpha}}=o(1),\]
which concludes the proof of the theorem.\hfill$\Box$

\subsection{The Case \texorpdfstring{$\bm{1<\beta<2}$}{1<beta<2}}
We consider now the case when the power law exponent $\beta$ satisfies $1<\beta <2$.
Our approach is as follows. We choose a subset $[x\Delta ,\Delta ]$ of high-degree vertices and 
construct a cut for the subgraph $G_{[x\Delta, \Delta ]}$ induced by these vertices. Here $x\in (0,1)$ is a parameter of the construction.
We will show that we can choose $x$ in such a way that $G_{[x\Delta, \Delta ]}$ is dense and the volume of the residual set of vertices
$\left [1, x\Delta \right )$ is small, namely $\text{vol}(\left [1, x\Delta \right ))=o(|E|)$. 
Then we will construct a $(1+\epsilon )$-approximate solution for MAX-CUT on $G_{[x\Delta, \Delta ]}$ and afterwards place the remaining 
vertices arbitrarily. 

This approach is based on precise estimates for sizes and volumes of node degree intervals of the form $[1,x\Delta]$ and $[x\Delta,\Delta ]$, 
where $x\in (0,1)$ is the parameter of the construction and $\Delta=\lfloor e^{\nicefrac{\alpha}{\beta}}\rfloor$ is the maximum degree.
These estimates rely on the following lemma, which is also illustrated in \autoref{fig:StepFunction}.
\begin{lemma}
Let $f\colon {\mathbb R}^+\to {\mathbb R}^+$ be a monotone decreasing integrable convex function. Then we have the following bounds.
\begin{itemize}
\item[(a)] $\sum\limits_{a}^{b}\lfloor f(i)\rfloor\in\left [\int\limits_{a}^{b+1}f(t)dt -(b-a+1),\int\limits_{a}^{b+1}f(t)dt +f(a)-f(b+1)\right ]$
\item[(b)] $\sum\limits_{a}^{b}i\cdot \lfloor f(i)\rfloor\in
            \left [\int\limits_{a}^{b+1}tf(t)dt -\left (\frac{b(b+1)}{2}-\frac{(a-1)a}{2} \right ),
            \int\limits_{a}^{b+1}tf(t)dt +|(b+1)f(b+1)-af(a)| \right ]$
\end{itemize}
\end{lemma}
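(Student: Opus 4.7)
The plan is to derive both parts from the standard floor sandwich $f(i)-1 \leq \lfloor f(i) \rfloor \leq f(i)$ combined with Riemann-sum estimates that follow from monotonicity (and, in part (b), convexity) of $f$.

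For (a), I would start from the one-step monotonicity estimates $f(i+1) \leq \int_i^{i+1} f(t)\,dt \leq f(i)$. Summing the right-hand inequality from $i=a$ to $b$ gives $\int_a^{b+1} f(t)\,dt \leq \sum_{i=a}^b f(i)$; combining this with $\lfloor f(i)\rfloor \geq f(i)-1$ and accounting for the $b-a+1$ summands yields the claimed lower bound. For the upper bound I would observe that $f(i) - \int_i^{i+1} f(t)\,dt \leq f(i) - f(i+1)$ (since $\int_i^{i+1} f \geq f(i+1)$ by monotonicity), sum telescopically to obtain $\sum_{i=a}^b f(i) \leq \int_a^{b+1} f(t)\,dt + f(a) - f(b+1)$, and apply $\lfloor f(i)\rfloor \leq f(i)$.

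For (b), the analogous weighted sandwich $if(i) - i \leq i\lfloor f(i)\rfloor \leq if(i)$ accounts cleanly for the slack $\sum_{i=a}^b i = \tfrac{b(b+1)}{2} - \tfrac{(a-1)a}{2}$ appearing in the stated lower bound, reducing the task to comparing $\sum_{i=a}^b if(i)$ with $\int_a^{b+1} tf(t)\,dt$. The obstruction here is that the weighted function $g(t) := tf(t)$ need \emph{not} be monotone even when $f$ is convex decreasing (e.g. $g$ may have an interior maximum), so the simple monotonicity argument from (a) cannot be quoted verbatim. To handle this, I would use convexity of $f$ as follows: on each unit interval $[i,i+1]$ the secant upper bound $f(t) \leq (i+1-t)f(i) + (t-i)f(i+1)$ combines, after elementary integration against $t$, with the monotone lower bound $f(t) \geq f(i+1)$ to give one-sided estimates on $\int_i^{i+1} tf(t)\,dt$ that are affine in the boundary values $if(i)$ and $(i+1)f(i+1)$. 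Summing these per-interval bounds from $i=a$ to $b$, the interior contributions cancel and what survives is precisely a boundary term of size $|(b+1)f(b+1) - af(a)|$, yielding the stated upper bound.

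The principal obstacle is exactly this telescoping step in (b): without convexity, the per-interval errors could accumulate linearly in $b-a$ rather than cancelling. Convexity of $f$ is precisely what forces the per-interval error contributions to collapse into the single boundary difference $|(b+1)f(b+1) - af(a)|$; everything else reduces to routine estimates of the form used in (a).
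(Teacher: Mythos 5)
Your part (a) is fine and coincides with what the paper intends: the paper states this lemma without a written proof, justifying it only by the step-function picture in its figure, and your telescoping argument $f(i)-\int_i^{i+1}f(t)\,dt\leq f(i)-f(i+1)$ together with $f(i)-1\leq\lfloor f(i)\rfloor\leq f(i)$ is exactly that standard comparison (note convexity is never used there). The genuine gap is in part (b), in the step you yourself flag as the crux. The chord bound $f(t)\leq(i+1-t)f(i)+(t-i)f(i+1)$ bounds $\int_i^{i+1}tf(t)\,dt$ from \emph{above}, so it can only serve the lemma's lower bound; for the upper bound you need a lower bound on that integral, and the only one you invoke, $f(t)\geq f(i+1)$, gives $\int_i^{i+1}tf(t)\,dt\geq\left(i+\frac{1}{2}\right)f(i+1)$. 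Summing $if(i)-\left(i+\frac{1}{2}\right)f(i+1)$ over $i=a,\dots,b$ and applying Abel summation leaves
\[ af(a)-\left(b+\tfrac{1}{2}\right)f(b+1)+\tfrac{1}{2}\sum_{j=a+1}^{b}f(j), \]
i.e.\ an \emph{interior} residue $\frac{1}{2}\sum_{j=a+1}^{b}f(j)$ that does not telescope away. For $f(t)=\nicefrac{1}{t}$ this residue is $\sim\frac{1}{2}\log(\nicefrac{b}{a})$ while the claimed boundary allowance $|(b+1)f(b+1)-af(a)|$ is $0$ (and the lemma is tight there, since $\sum_{i=a}^{b}1=\int_a^{b+1}1\,dt$), so the asserted cancellation is not merely unproved --- with the ingredients you name it is false.

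More decisively, no argument can establish (b) in the generality you are attempting, because the stated lower bound is false for general monotone decreasing convex $f$: take $f(t)=10.5\,e^{-t/1000}$ (positive, decreasing, convex), $a=1$, $b=2$. Then $\sum_{i=1}^{2}i\lfloor f(i)\rfloor=30$, whereas $\int_1^3 tf(t)\,dt-\left(\frac{b(b+1)}{2}-\frac{(a-1)a}{2}\right)\approx 41.9-3=38.9$; even without floors, $f(1)+2f(2)\approx 31.4$. What makes (b) true in every place the paper uses it is the hidden hypothesis that $g(t)=tf(t)$ is itself monotone non-increasing, which holds for $f(t)=e^{\alpha}t^{-\beta}$ with $\beta\geq 1$ --- the only cases invoked. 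Under that hypothesis both bounds of (b) follow by running your part (a) argument verbatim on $g$: the upper boundary term is $g(a)-g(b+1)=af(a)-(b+1)f(b+1)$ (the absolute value merely covers orientation), and the lower-bound slack $\sum_{i=a}^{b}i=\frac{b(b+1)}{2}-\frac{(a-1)a}{2}$ comes from $i\lfloor f(i)\rfloor\geq if(i)-i$ exactly as you note. So your instinct that $tf(t)$ may fail to be monotone identified the right obstruction, but the convexity-based repair does not exist; the correct fix is to add (or observe, for the paper's $f$) monotonicity of $tf(t)$, not to lean on convexity of $f$.
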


\begin{figure}[htb]
 \centering

\pgfplotsset{
    right segments/.code={\pgfmathsetmacro\rightsegments{#1}},
    right segments=3,
    right/.style args={#1:#2}{
        ybar interval,
        domain=#1+((#2-#1)/\rightsegments):#2+((#2-#1)/\rightsegments),
        samples=\rightsegments+1,
        x filter/.code=\pgfmathparse{\pgfmathresult-((#2-#1)/\rightsegments)}
    }
}

\pgfplotsset{
    left segments/.code={\pgfmathsetmacro\leftsegments{#1}},
    left segments=3,
    left/.style args={#1:#2}{
        ybar interval,
        domain=#1:#2,
        samples=\leftsegments+1,
        x filter/.code=\pgfmathparse{\pgfmathresult}
       }
}

\begin{tikzpicture}[/pgf/declare function={f=4/x;},scale=1.25]
\begin{axis}[
        xmin=0,xmax=10,ymin=0,ymax=4.5,
    domain=0.91:9.75,
    samples=100,
    axis lines=middle,
    ytick={0,4,2,1.333,1,.8,.666,.571,.5,.444},
    yticklabels={$0$,$f(a)$,$f(\medmuskip=0mu\relax a+1)$,$f(\medmuskip=0mu\relax a+2)$},
    xtick={0,1,...,9},
    xticklabels={$0$,$a\vphantom{1}$,$\medmuskip=0mu\relax a+1$,$\medmuskip=0mu\relax a+2$,,,,$b$,$\medmuskip=0mu\relax b+1$},
                 x tick label as interval,
    tick label style={font=\footnotesize},
]
\addplot [
    black!80,fill=gray,opacity=.3,
    left segments=7,
    left=1:8
] {f};

\addplot [
    black!80,fill=gray,opacity=.3,
    left segments=1,
    left=8:9
] {f(4)};

\addplot [
    black!80,fill=gray!80,opacity=1,
    right segments=8,
    right=1:9,
] {f};
\addplot [thick] {f};
\addplot [dotted,domain=1:9] {4};
\addplot [ dotted,domain=2:9] {2};
\addplot [ dotted,domain=3:9] {4/3};
\addplot [ dotted,domain=4:9] {1};
\addplot [ dotted,domain=5:9] {4/5};
\addplot [ dotted,domain=6:9] {4/6};
\addplot [ dotted,domain=7:9] {4/7};
\addplot [ dotted,domain=8:9] {4/8};
\end{axis}
\end{tikzpicture}
 \caption{Estimating the volumes of node degree intervals.}
 \label{fig:StepFunction}
\end{figure}

Using this lemma, we obtain the following bounds for the size of $[x\Delta ,\Delta ]$:
\begin{align*}
|[x\Delta, \Delta ]| &= \sum_{x\Delta}^{\Delta}\left\lfloor\frac{e^{\alpha}}{j^{\beta}}\right\rfloor\\
      &  \in \left [\int_{x\Delta}^{\Delta +1}\frac{e^{\alpha}}{t^{\beta}}dt-(\Delta (1-x)+1),
                     \int_{x\Delta}^{\Delta +1}\frac{e^{\alpha}}{t^{\beta}}dt +\frac{e^{\alpha}}{(x\Delta)^{\beta}}-\frac{e^{\alpha}}{(\Delta +1)^{\beta}}\right ]\\
      & = \left [ e^{\alpha}\left [\frac{t^{1-\beta}}{1-\beta} \right ]_{x\Delta}^{\Delta +1} -\Delta(1-x)-1,\:
                      e^{\alpha}\left [\frac{t^{1-\beta}}{1-\beta} \right ]_{x\Delta}^{\Delta +1} 
                      +\frac{e^{\alpha}}{(x\Delta)^{\beta}}-\frac{e^{\alpha}}{(\Delta +1)^{\beta}}\right ]\\
      & \subseteq \left [ \frac{e^{\alpha}}{\beta -1}\cdot\frac{1}{\Delta^{\beta -1}}\cdot\left (\frac{1}{x^{\beta -1}}-1 \right )
                            -\Delta(1-x)-1,\right.\\
      & \qquad\left.  \frac{e^{\alpha}}{\beta -1}\cdot\frac{1}{\Delta^{\beta -1}}\cdot\left (\frac{1}{x^{\beta -1}}-\frac{1}{2^{\beta -1}}\right )\:
                            + \frac{e^{\alpha}}{\Delta^{\beta }}\cdot \left (\frac{1}{x^{\beta }}-\frac{1}{2^{\beta }} \right ) \right ]\\
      & =  \left [ \Delta\left (\frac{x^{1-\beta}-1}{\beta -1}-1+x-\frac{1}{\Delta} \right ),\:
                            \frac{\Delta}{\beta -1}\left (\frac{1}{x^{\beta -1}}-\frac{1}{2^{\beta -1}}\right ) +\frac{1}{x^{\beta}}-\frac{1}{2^{\beta}}\right ] 
\end{align*}
We also obtain the following estimate for the volume of the interval $[1,x\Delta ]$.
\begin{align*}
  \text{vol}&([1,x\Delta]) = \sum_{1}^{x\Delta} j\cdot \left\lfloor\frac{e^{\alpha}}{j^{\beta}}\right\rfloor\\
      & \in \left [\int_{1}^{x\Delta +1}\frac{e^{\alpha}}{t^{\beta -1}}dt\: -\frac{x\Delta (x\Delta +1)}{2},\:
                     \int_{1}^{x\Delta +1}\frac{e^{\alpha}}{t^{\beta -1}}dt\: +e^{\alpha}\left (\frac{1}{(x\Delta -1)^{\beta -1}}-1\right ) \right ] \\
      & \subseteq  \left [ \frac{e^{\alpha}}{2-\beta }\left ((x\Delta +1)^{2-\beta }-1\right )-\frac{x\Delta (x\Delta +1)}{2},\right.\\
      &  \qquad \left. \frac{e^{\alpha}}{2-\beta }\left ((x\Delta +1)^{2-\beta }-1\right )
                                        +e^{\alpha}\left (\frac{1}{(x\Delta -1)^{\beta -1}}-1\right )\right ] \\
      & \subseteq  \left [(1-o(1))e^{\nicefrac{2\alpha}{\beta}}\left (\frac{x^{2-\beta}}{2-\beta }-\frac{x^2}{2} \right ),\:
                           (1+o(1))\frac{x^{2-\beta }}{2-\beta}e^{\nicefrac{2\alpha}{\beta}}\: +(1+o(1))\frac{e^{\nicefrac{\alpha}{\beta}}}{x^{\beta -1}}\right ]
\end{align*}
In particular, if $x$ is constant within the interval $(0,1)$, i.e. does not converge to $0$ or $1$, then we obtain the following estimates:
\begin{itemize}
\item The size of the interval $[x\Delta, \Delta ]$ is 
      $|[x\Delta , \Delta ]|=(1+o(1))\frac{e^{\alpha\slash\beta}}{\beta -1}\left (\frac{1}{x^{\beta -1}}-\frac{1}{2^{\beta -1}} \right )$.
\item Thus for $x$ being constant, the kernel function for the total error is $O(e^{2\alpha\slash\beta})$
\item For $x$ being constant, the volume of $[1,x\Delta ]$ is 
      $\text{vol}([1,x\Delta ])= (1+o(1)) \frac{x^{2-\beta}}{2-\beta}\cdot e^{2\alpha\slash\beta }$.
\end{itemize}
Thus we proceed as follows. Given an $(\alpha,\beta )$-PLG $G$ and $\epsilon >0$, we 
first choose $x\in (0,1)$ such that  
\[|E(G_{[x\Delta,\Delta ]})|\geq\frac{1}{2}\frac{e^{2\alpha\slash\beta}}{2-\beta}-\text{vol}([1,x\Delta ])\:
\geq\:\left (\frac{1}{2}-x^{2-\beta}\right )\frac{e^{2\alpha\slash\beta}}{2-\beta}\geq\left (1-\frac{1}{\tau (\epsilon )}\right )\cdot |E|,\]
where $\tau (\epsilon )$ is a function of $\epsilon$ yet to be defined.
Then we choose a second parameter $\epsilon'>0$ and construct a cut in $G_{[x\Delta,\Delta ]}$, using the 
AKK-algorithm with precision parameter $\epsilon$. This yields a cut in $G_{[x\Delta,\Delta ]}$ of size
at least $\text{MAX-CUT}(G_{[x\Delta,\Delta ]})-\epsilon'\cdot |[x\Delta,\Delta]|^2$. By the choice of $x$ we have
\[\text{MAX-CUT}(G_{[x\Delta,\Delta ]})\geq \text{MAX-CUT}(G)-\text{vol}([1,x\Delta])
\geq \text{MAX-CUT}(G)-\frac{1}{\tau (\epsilon )}\cdot |E|\]
Moreover,
\[|[x\Delta ,\Delta ]|^2\leq \frac{e^{2\alpha\slash\beta}}{(\beta -1)^2}\left (\frac{1}{x^{\beta -1}}-1\right )^2
=|E|\cdot \frac{2\cdot (2-\beta)}{(\beta -1)^2}\left (\frac{1}{x^{\beta -1}}-1\right )^2\]
Thus the size of the cut constructed in this way is at least
\begin{align*}&\text{MAX-CUT}(G)-\frac{|E|}{\tau (\epsilon )}
-\epsilon'\cdot |E|\cdot\frac{2(2-\beta )}{(\beta -1)^2}\left (\frac{1}{x^{\beta -1}}-1\right )^2 \\
&\geq \text{MAX-CUT}(G)
\left (1-\frac{2}{\tau (\epsilon )}-\epsilon'\cdot \frac{4(2-\beta )}{(\beta -1)^2}\left (\frac{1}{x^{\beta -1}}-1\right )^2\right ) 
\end{align*}
We want to achieve that this yields a $(1+\epsilon )$-approximation, i.e.
\begin{align*}
1-\frac{2}{\tau (\epsilon )}-\epsilon'\cdot \frac{4(2-\beta )}{(\beta -1)^2}\left (\frac{1}{x^{\beta -1}}-1\right )^2
&\geq\frac{1}{1+\epsilon}=1-\frac{\epsilon}{1+\epsilon},\\
\shortintertext{which is equivalent to} 
\frac{2}{\tau (\epsilon )}+\epsilon'\cdot \frac{4(2-\beta )}{(\beta -1)^2}\left (\frac{1}{x^{\beta -1}}-1\right )^2
&\leq \frac{\epsilon}{1+\epsilon}
\end{align*}
We achieve this in a two-step approach: First we define the function $\tau$ in such a way that
$\frac{2}{\tau (\epsilon )}\leq\frac{1}{2}\cdot\frac{\epsilon}{1+\epsilon}$, namely $\tau (\epsilon)=\frac{4(1+\epsilon )}{\epsilon}$.
Then we choose $\epsilon'$ appropriately, namely such that
\[\epsilon'\cdot \frac{4(2-\beta )}{(\beta -1)^2}\left (\frac{1}{x^{\beta -1}}-1\right )^2\leq\frac{1}{2}\cdot\frac{\epsilon}{1+\epsilon}
\text{,\quad or equivalently \quad} 
\epsilon'\leq
\frac{\frac{1}{2}\cdot\frac{\epsilon}{1+\epsilon}}
     {\frac{4(2-\beta)}{(\beta -1)^2}\left (2\tau (\epsilon )^{\frac{2-\beta}{\beta -1}}-1\right )^2}\]
Altogether we obtain the following result.
\begin{theorem}
For every fixed $1<\beta <2$, there is a PTAS for MAX-CUT in $(\alpha,\beta )$-Power Law Graphs.
\end{theorem}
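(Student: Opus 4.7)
The plan is to reduce to the AKK PTAS for dense MAX-CUT by discarding a small-volume low-degree tail. Given $\epsilon>0$ and an $(\alpha,\beta)$-PLG $G$ with $1<\beta<2$, I would partition the vertex set at a threshold $x\Delta$ with $x\in(0,1)$ to be fixed, and work with the induced subgraph $G_{[x\Delta,\Delta]}$ on the high-degree vertices. Two facts drive the argument: (i) the asymptotic edge count $m\sim \frac{1}{2}\frac{e^{2\alpha/\beta}}{2-\beta}$ makes $|E|$ of the same order as $|[x\Delta,\Delta]|^2$ whenever $x$ is a positive constant, so $G_{[x\Delta,\Delta]}$ is dense in the AKK sense; (ii) the volume $\text{vol}([1,x\Delta])$ scales like $x^{2-\beta}\,e^{2\alpha/\beta}$ and can therefore be pushed below any prescribed fraction of $|E|$ by taking $x$ small.

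To make this quantitative I would apply the preceding integral-bound lemma to $f(t)=e^{\alpha}/t^{\beta}$ and reuse the two interval estimates already derived: a two-sided bound on $|[x\Delta,\Delta]|$ and on $\text{vol}([1,x\Delta])$. With these in hand, I pick $x$ so that $\text{vol}([1,x\Delta])\leq |E|/\tau(\epsilon)$ for $\tau(\epsilon)=4(1+\epsilon)/\epsilon$, which by the explicit volume estimate forces $x$ to a value depending only on $\epsilon$ and $\beta$. I then run AKK on $G_{[x\Delta,\Delta]}$ with a secondary precision parameter $\epsilon'$, obtaining a cut within additive error $\epsilon'\cdot |[x\Delta,\Delta]|^2$ of the optimum on that subgraph, and extend this cut to $G$ by assigning the tail vertices arbitrarily.

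The approximation-ratio analysis then combines two additive losses against the trivial lower bound $\text{MAX-CUT}(G)\geq |E|/2$: discarding the tail costs at most $\text{vol}([1,x\Delta])\leq |E|/\tau(\epsilon)$, and the AKK additive error is at most $\epsilon'|[x\Delta,\Delta]|^2$, which by the size estimate is a constant (depending on $\beta$ and $x$) times $|E|$. Setting $\tau(\epsilon)=4(1+\epsilon)/\epsilon$ absorbs half of the $\epsilon/(1+\epsilon)$ budget, and choosing $\epsilon'$ as the explicit function of $\epsilon,\beta,x$ indicated in the excerpt absorbs the other half, jointly yielding a cut of size at least $\text{MAX-CUT}(G)/(1+\epsilon)$.

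The main obstacle is bookkeeping rather than anything conceptual: the constants in both estimates degenerate as $\beta\to 2^-$ (through $1/(\beta-1)$ and $1/(2-\beta)$) and as $x\to 0$ (through $x^{1-\beta}$ and $x^{2-\beta}$), so for each fixed $\beta\in(1,2)$ one must check that the two-parameter choice of $x$ and $\epsilon'$ in terms of $\epsilon$ keeps $\epsilon'$ strictly positive and that the resulting AKK invocation remains polynomial in $n=|V(G)|$. Uniformity in $\alpha$ is cleanly handled by the fact that every quantity entering the final ratio scales like $e^{2\alpha/\beta}$, so the $o(1)$ correction factors coming from the integral bounds vanish as $\alpha\to\infty$, and the whole construction is a genuine PTAS for each fixed $\beta$ in the open interval $(1,2)$.
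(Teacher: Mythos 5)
Your proposal is correct and follows essentially the same route as the paper's proof: the same threshold decomposition at $x\Delta$, the same integral-bound lemma for the size of $[x\Delta,\Delta]$ and the volume of $[1,x\Delta]$, the same choice $\tau(\epsilon)=4(1+\epsilon)/\epsilon$, and the same two-step budget split between the discarded tail and the AKK additive error $\epsilon'\cdot|[x\Delta,\Delta]|^2$, measured against $\mathrm{MAX\text{-}CUT}(G)\geq |E|/2$. Your closing remarks on the degeneration of constants as $\beta\to 2^-$ and the uniformity in $\alpha$ are accurate observations consistent with the paper's treatment.
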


Now we consider the case when the power law exponent is $\beta =2$.
In this case, the number of nodes is still linear in $e^{\alpha}$, but now the number
of edges drops down to $n\cdot\log (n)$. More precisely, the number of edges of an $(\alpha , 2)$-PLG is asymptotically equal to $\frac{1}{4}\alpha e^{\alpha}$, 
while the number of nodes is $\zeta (\beta )e^{\alpha}$.


\subsection{The Functional Case \texorpdfstring{$\bm{\beta_f=2-\frac{1}{f(\alpha)}}$}{}}\label{sec:functional}
We have shown in the previous sections that MAX-CUT on PLGs admits a PTAS for every fixed $\beta <2$. We consider now the 
functional case when $\beta_f=2-\nicefrac{1}{f(\alpha)}$, where $f(\alpha )$ is a monotone increasing function 
with $f(\alpha )\longrightarrow\infty$ as $\alpha\rightarrow\infty$. In this section we will show the following result.
\begin{theorem} 
For $\beta_f=2-\nicefrac{1}{f(\alpha )}$ with $f(\alpha )\longrightarrow\infty$ as $\alpha\rightarrow\infty$ and $f(\alpha)=o(\alpha)$,
there is a PTAS for MAX-CUT in $(\alpha,\beta_f)$-PLGs.
\end{theorem}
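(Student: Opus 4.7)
The plan is to extend the partitioning-plus-AKK strategy from the fixed $1<\beta<2$ case to the functional regime, allowing the threshold $x$ to depend on both $\epsilon$ and $f(\alpha)$. Since $2-\beta_f = 1/f(\alpha)$ and $\beta_f-1 = 1-1/f(\alpha)$ now vary with $\alpha$, the estimates from the volume/size lemma specialize to
\[
\frac{\text{vol}([1,x\Delta])}{|E|} = (1+o(1))\cdot 2\,x^{1/f(\alpha)}, \qquad
\frac{|[x\Delta,\Delta]|^2}{|E|} = (1+o(1))\cdot \frac{2\,x^{-2(1-1/f(\alpha))}}{f(\alpha)(1-1/f(\alpha))^2}.
\]
These two formulas will drive the choice of $x$ and of the AKK precision parameter $\epsilon'$.

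First I would fix $\epsilon>0$ and choose $x = x(\alpha,\epsilon) = (\epsilon/c(\epsilon))^{f(\alpha)}$ for a suitable constant $c(\epsilon)>2$, so that $\text{vol}([1,x\Delta]) \leq (\epsilon/4)\,|E|$; this ensures $\text{MAX-CUT}(G_{[x\Delta,\Delta]}) \geq \text{MAX-CUT}(G) - (\epsilon/4)|E|$. Next I would run the AKK dense MAX-CUT algorithm on $G_{[x\Delta,\Delta]}$ with additive precision $\epsilon'$ satisfying $\epsilon'\cdot|[x\Delta,\Delta]|^2 \leq (\epsilon/4)|E|$, which by the second estimate forces $1/\epsilon' = (c(\epsilon)/\epsilon)^{\Theta(f(\alpha))}$, i.e., exponentially large in $f(\alpha)$. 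Finally I would place the remaining vertices of $[1,x\Delta)$ on either side of the cut arbitrarily. Summing the two error terms and using the random-cut lower bound $\text{MAX-CUT}(G) \geq |E|/2$ yields a cut of size at least $(1-\epsilon)\cdot\text{MAX-CUT}(G)$, providing a $(1+O(\epsilon))$-approximation as required.

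The hard part will be to verify that the AKK step runs in time polynomial in $n = \Theta(e^\alpha)$ despite $1/\epsilon'$ being exponentially large in $f(\alpha)$. The sampling-based version of AKK has running time of the form $2^{\text{poly}(1/\epsilon')}\cdot n^{O(1)}$, so polynomial-in-$n$ runtime requires $\text{poly}(1/\epsilon') = O(\alpha)$; substituting the expression above reduces this to a constraint of the form $(c(\epsilon)/\epsilon)^{O(f(\alpha))} = O(\text{poly}(\alpha))$. After calibrating $c(\epsilon)$ suitably and using the hypothesis $f(\alpha) = o(\alpha)$ (together with $f(\alpha)\to\infty$ to absorb lower-order terms in the estimates), this constraint is satisfied for all sufficiently large $\alpha$. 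This is precisely the step at which the hypothesis $f(\alpha)=o(\alpha)$ enters essentially: it provides exactly the slack needed to balance the exponential-in-$f$ precision demanded by the density of $G_{[x\Delta,\Delta]}$ against the $n^{O(1)}$ runtime budget imposed by the definition of a PTAS.
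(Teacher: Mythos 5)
Your setup is sound and matches the paper's framework: the split at $x\Delta_f$, the two ratio estimates (which are consistent with the paper's interval computations), running AKK on the top interval, and placing $[1,x\Delta_f)$ arbitrarily. The gap is in the final runtime step, and it is fatal for the stated range of $f$. With $x=(\epsilon/c(\epsilon))^{f(\alpha)}$, your own second estimate gives $|[x\Delta_f,\Delta_f]|^2/|E| = \Theta\bigl((c/\epsilon)^{2f(\alpha)}/f(\alpha)\bigr)$, forcing $1/\epsilon' = (c/\epsilon)^{\Theta(f(\alpha))}$, as you note. But the claim that $(c(\epsilon)/\epsilon)^{O(f(\alpha))} = O(\mathrm{poly}(\alpha))$ can be arranged "after calibrating $c(\epsilon)$" and "using $f(\alpha)=o(\alpha)$" is false: since $c(\epsilon)/\epsilon>1$ is a constant, $(c/\epsilon)^{\Theta(f(\alpha))}=\mathrm{poly}(\alpha)$ forces $f(\alpha)=O(\log\alpha)$, which is much stronger than $f(\alpha)=o(\alpha)$. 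Concretely, take $f(\alpha)=\sqrt{\alpha}$, which satisfies all hypotheses of the theorem; then $1/\epsilon'=e^{\Theta(\sqrt{\alpha})}$, while $n=\Theta(e^{\alpha})$ so that a polynomial time budget means $2^{O(\alpha)}$. Both the $n^{O(1/\epsilon'^2)}$ version of AKK and the sampling version with running time $2^{\mathrm{poly}(1/\epsilon')}\cdot n^{O(1)}$ are then superpolynomial in $n$ (the latter doubly exponentially so). No choice of the constant $c(\epsilon)$ repairs an exponent that grows with $\alpha$; your argument therefore yields a PTAS only for $f(\alpha)=O(\log\alpha)$.

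The paper avoids this by never letting $x$ be exponentially small in $f(\alpha)$. It formulates the two conditions $|[x\Delta_f,\Delta_f]|^2=O(|E|)$ and $\mathrm{vol}[1,x\Delta_f]=o(|E|)$, and balances them with the choice $x=1/f(\alpha)$, which is only polynomially small in $f(\alpha)$. Under the first condition, a \emph{constant} AKK precision $\epsilon'=\epsilon'(\epsilon)$ already bounds the additive error $\epsilon'|[x\Delta_f,\Delta_f]|^2$ by $O(\epsilon'|E|)$, so the algorithm runs in time polynomial in $n$ uniformly over the whole range $f(\alpha)=o(\alpha)$. In the paper, the hypothesis $f(\alpha)=o(\alpha)$ enters not through the runtime but through the asymptotic estimates: it is case (a) of the analysis of $T_{f,\alpha}=e^{\alpha}\bigl(e^{\alpha/(2f(\alpha)-1)}-1\bigr)$, which guarantees that $e^{\alpha\cdot\frac{2f(\alpha)}{2f(\alpha)-1}}$ dominates $e^{\alpha}$, i.e.\ that the high-degree part carries asymptotically all of the edge volume. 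The trade-off you encountered is genuine --- shrinking $x$ exponentially satisfies the volume condition with room to spare but inflates $|[x\Delta_f,\Delta_f]|^2/|E|$ by the factor $(c/\epsilon)^{2f(\alpha)}$, and you paid for that with sub-constant precision, shifting the problem into the running time where, for $f$ between $\log\alpha$ and $\alpha$, it cannot be paid. The paper's balance point keeps the precision constant, which is exactly what makes the full range $f(\alpha)=o(\alpha)$ reachable.
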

The proof of the Theorem is based on the following observation.
It is sufficient to show that we can split a given $(\alpha,\beta_f)$-PLG $G$ into two parts $\left [1,x\Delta_f\right )$ and 
$[x\Delta_f,\Delta_f]$ such that the following two conditions are satisfied:
\begin{itemize}
\item[(1)] $|[x\Delta_f, \Delta_f ]|^2=O(|E|)$,
\item[(2)] $\text{vol}[1,x\Delta_f,]=o(|E|)$.
\end{itemize}
Before we give the proof of the Theorem, we have to provide precise estimates for the sizes and volumes of these node degree intervals. 
The maximum degree is 
$\Delta_f=\lfloor e^{\alpha\slash\beta_f}\rfloor$. The number of vertices is 
\begin{align*}
\sum_{j=1}^{\Delta_f}\left\lfloor\frac{e^{\alpha}}{j^{\beta_f}}\right\rfloor 
 & \in \left [\sum_{j=1}^{\Delta_f}\frac{e^{\alpha}}{j^{\beta_f}}-\Delta_f,\: \sum_{j=1}^{\Delta_f}\frac{e^{\alpha}}{j^{\beta_f}} \right ]
\end{align*}
This sum can be approximated by the associated integral:
\begin{align*}
\sum_{j=1}^{\Delta_f}\frac{e^{\alpha}}{j^{\beta_f}}
 & \in \left [\int_{1}^{e^{\alpha\slash\beta_f}+1}\frac{e^{\alpha}}{z^{\beta_f}}dz,\: 
                \int_{1}^{e^{\alpha\slash\beta_f}+1}\frac{e^{\alpha}}{z^{\beta_f}}dz\: 
                +\frac{e^{\alpha}}{1^{\beta_f}}-\frac{e^{\alpha}}{(\lfloor e^{\alpha\slash\beta_f}\rfloor +1)^{\beta_f}} \right ] \\
 &\subseteq  \left [ e^{\alpha}\cdot\left [\frac{z^{-1+\nicefrac{1}{f(\alpha )}}}{-1+\nicefrac{1}{f(\alpha )}} \right ]_{1}^{e^{\alpha\slash\beta_f}+1},\:\:
                      e^{\alpha}\cdot\left [\frac{z^{-1+\nicefrac{1}{f(\alpha )}}}{-1+\nicefrac{1}{f(\alpha )}} \right ]_{1}^{e^{\alpha\slash\beta_f}+1}
                      \: +\: (1-o(1))e^{\alpha}\right ]\\
 &\subseteq  \left [ \frac{e^{\alpha}}{1-\nicefrac{1}{f(\alpha )}}\left (1-\frac{1}{e^{\alpha\cdot\frac{f(\alpha )-1}{2f(\alpha )-1}}}\right ),\:\:
                       \frac{e^{\alpha}}{1-\nicefrac{1}{f(\alpha )}}\left (1-\frac{1}{e^{\alpha\cdot\frac{f(\alpha )-1}{2f(\alpha )-1}}}\right )
                       \: +\: (1-o(1))e^{\alpha} \right ]
\end{align*}
Similarly we obtain:
\[\left [x\Delta_f,\Delta_f \right ]\: =\:\sum_{x\Delta_f}^{\Delta_f}\left\lfloor\frac{e^{\alpha}}{j^{\beta_f}}\right\rfloor
\:\in\:\left [\sum_{j=x\Delta_f}^{\Delta_f}\frac{e^{\alpha}}{j^{\beta_f}}-\Delta_f\left (1-x+\frac{1}{\Delta_f} \right ),\:\:
              \sum_{j=x\Delta_f}^{\Delta_f}\frac{e^{\alpha}}{j^{\beta_f}} \right ],\]
with 
\begin{align*}
\sum&_{j=x\Delta}^{\Delta_f}\frac{e^{\alpha}}{j^{\beta_f}}
  \in \left [\int_{xe^{\alpha\slash\beta_f}}^{e^{\alpha\slash\beta_f}+1}\frac{e^{\alpha}}{z^{\beta_f}}dz,\:
                \int_{xe^{\alpha\slash\beta_f}}^{e^{\alpha\slash\beta_f}+1}\frac{e^{\alpha}}{z^{\beta_f}}dz\:
                +\frac{e^{\alpha}}{(xe^{\alpha\slash\beta_f})^{\beta_f}}-\frac{e^{\alpha}}{(\lfloor e^{\alpha\slash\beta_f}\rfloor +1)^{\beta_f}} \right ]\\
 & \subseteq \left [\frac{e^{\alpha}}{1-\frac{1}{f(\alpha )}}
                      \left (\frac{1}{(xe^{\alpha\slash\beta_f})^{1-\frac{1}{f(\alpha )}}}
                            -(\frac{1}{(e^{\alpha\slash\beta_f}+1)^{1-\frac{1}{f(\alpha )}}} \right ),\right.\\
 & \qquad\left. \frac{e^{\alpha}}{1-\frac{1}{f(\alpha )}}
                      \left (\frac{1}{(xe^{\alpha\slash\beta_f})^{1-\frac{1}{f(\alpha )}}}
                            -\frac{1}{(e^{\alpha\slash\beta_f}+1)^{1-\frac{1}{f(\alpha )}}} \right )
                      +\frac{e^{\alpha}}{(xe^{\alpha\slash\beta_f})^{\beta_f}}
                      -\frac{e^{\alpha}}{(e^{\alpha\slash\beta_f}+1)^{\beta_f}}\right ]\\
 & \subseteq  \left [\frac{f(\alpha )}{f(\alpha )-1}\cdot e^{\alpha\cdot\frac{f(\alpha )}{2f(\alpha )-1}}
                      \cdot \left (\frac{1}{x^{\frac{f(\alpha )-1}{f(\alpha )}}}-1 \right ),\right.\\
 & \qquad \left. \frac{f(\alpha )}{f(\alpha )-1}\cdot e^{\alpha\cdot\frac{f(\alpha )}{2f(\alpha )-1}}
                      \cdot \left (\frac{1}{x^{\frac{f(\alpha )-1}{f(\alpha )}}}-\frac{1}{2^{\frac{2f(\alpha )-1}{f(\alpha )}}} \right ) 
                      \: +\frac{1}{x^{\frac{2f(\alpha )-1}{f(\alpha )}}}-\frac{1}{2^{\frac{2f(\alpha )-1}{f(\alpha )}}} \right ]
\end{align*}
Thus we obtain the following estimates for sizes of node degree intervals.
\begin{lemma}
Let $G=(V,E)$ be an $(\alpha,\beta_f)$-PLG with $\beta_f=2-\frac{1}{f(\alpha )}$. Then for every $0<x<1$, the size of the node degree interval 
$[x\Delta_f,\Delta_f]$ satisfies
\begin{align*}
|[x\Delta_f,\Delta_f]| & \in \left [\frac{f(\alpha )}{f(\alpha )-1}
                               \cdot e^{\alpha\cdot\frac{f(\alpha )}{2f(\alpha )-1}}
                               \cdot \left (\frac{1}{x^{\frac{f(\alpha )-1}{f(\alpha )}}}-1\right ) 
                               -\Delta_f\left (1-x+\frac{1}{\Delta_f} \right ), \right .\\
                       & \qquad   \left. \frac{f(\alpha )}{f(\alpha )-1}\cdot e^{\alpha\cdot\frac{f(\alpha )}{2f(\alpha )-1}}
                                      \cdot \left (\frac{1}{x^{\frac{f(\alpha )-1}{f(\alpha )}}}-\frac{1}{2^{\frac{2f(\alpha )-1}{f(\alpha )}}} \right) 
                                      \: +\frac{1}{x^{\frac{2f(\alpha )-1}{f(\alpha )}}}-\frac{1}{2^{\frac{2f(\alpha )-1}{f(\alpha )}}} \right ] 
\end{align*}
Moreover,
\begin{align*}
|V| & \in \left [ \frac{e^{\alpha}}{1-\nicefrac{1}{f(\alpha )}}\left (1-\frac{1}{e^{\alpha\cdot\frac{f(\alpha )-1}{2f(\alpha )-1}}}\right )-\Delta_f,\:
                       \frac{e^{\alpha}}{1-\nicefrac{1}{f(\alpha )}}\left (1-\frac{1}{e^{\alpha\cdot\frac{f(\alpha )-1}{2f(\alpha )-1}}}\right )
                        + (1-o(1))e^{\alpha} \right ]
\end{align*}
\end{lemma}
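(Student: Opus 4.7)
The plan is to derive both bounds from a single application of part (a) of the earlier integral-approximation lemma, applied to $\varphi(z) = e^\alpha/z^{\beta_f}$, which is positive, monotone decreasing, convex and integrable on $(0,\infty)$. Everything after that is mechanical integration plus the substitution $\beta_f = 2 - 1/f(\alpha)$, together with the algebraic identity $\alpha/\beta_f = \alpha f(\alpha)/(2f(\alpha) - 1)$ that rewrites $\Delta_f = \lfloor e^{\alpha/\beta_f}\rfloor$.

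For the interval bound I would apply part (a) to the sum $\sum_{j=x\Delta_f}^{\Delta_f} \lfloor \varphi(j)\rfloor$ defining $|[x\Delta_f, \Delta_f]|$. The number of summands is $\Delta_f - x\Delta_f + 1 = \Delta_f(1 - x + 1/\Delta_f)$, which supplies the additive error in the lower bound. The upper-bound additive error $\varphi(x\Delta_f) - \varphi(\Delta_f + 1)$ becomes, after substituting $\Delta_f^{\beta_f} \approx e^\alpha$ and $\beta_f = (2f(\alpha)-1)/f(\alpha)$, precisely $1/x^{(2f(\alpha)-1)/f(\alpha)} - 1/2^{(2f(\alpha)-1)/f(\alpha)}$. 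The main term is the integral $\int_{x\Delta_f}^{\Delta_f + 1} \varphi(z)\, dz$, evaluated via the antiderivative $e^\alpha z^{1-\beta_f}/(1-\beta_f)$. Writing $1 - \beta_f = -(f(\alpha) - 1)/f(\alpha)$ turns the prefactor $1/(1 - \beta_f)$ into $-f(\alpha)/(f(\alpha) - 1)$, and the factor $\Delta_f^{-(f(\alpha)-1)/f(\alpha)}$ absorbs the $e^\alpha$ into the target exponent $e^{\alpha f(\alpha)/(2f(\alpha) - 1)}$. The endpoint contribution at $\Delta_f + 1$ then produces the $1/2^{(2f(\alpha)-1)/f(\alpha)}$ shift in the upper bound, while $(x\Delta_f)^{-(f(\alpha)-1)/f(\alpha)}$ produces the leading $1/x^{(f(\alpha)-1)/f(\alpha)}$ factor.

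For the vertex count $|V| = \sum_{j=1}^{\Delta_f} \lfloor \varphi(j)\rfloor$, the same lemma yields additive error $\Delta_f$ in the lower bound (again the number of summands) and $\varphi(1) - \varphi(\Delta_f + 1) = e^\alpha - (1+o(1))$, which is $(1 - o(1))e^\alpha$, in the upper bound. The main integral $\int_1^{\Delta_f + 1} \varphi(z)\, dz$ evaluates to $\frac{e^\alpha}{1 - 1/f(\alpha)}\bigl(1 - (\Delta_f + 1)^{-(f(\alpha)-1)/f(\alpha)}\bigr)$, and using $\Delta_f + 1 \approx e^{\alpha f(\alpha)/(2f(\alpha)-1)}$ reduces the parenthesised factor to $1 - 1/e^{\alpha(f(\alpha)-1)/(2f(\alpha)-1)}$, exactly matching the claimed interval.

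The argument is routine once the integral-approximation lemma is invoked; the main obstacle is purely the bookkeeping of the exponents arising from $1 - \beta_f = -(f(\alpha) - 1)/f(\alpha)$ and $\alpha/\beta_f = \alpha f(\alpha)/(2f(\alpha) - 1)$, and deciding which boundary contributions should be folded into the main term (the $1/2^{(2f(\alpha)-1)/f(\alpha)}$ piece, coming from the integral's upper limit) versus left as additive correction terms (the pieces coming from $\varphi$ at the endpoints). No further ideas beyond the earlier lemma and these algebraic identities are required.
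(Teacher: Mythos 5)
Your proposal is correct and takes essentially the same route as the paper: the paper likewise obtains both estimates by applying the sum-versus-integral bounds of the step-function lemma (part (a)) to $\varphi(j)=e^{\alpha}/j^{\beta_f}$, evaluating the antiderivative $e^{\alpha}z^{1-\beta_f}/(1-\beta_f)$, and simplifying via $1-\beta_f=-(f(\alpha)-1)/f(\alpha)$ and $\alpha/\beta_f=\alpha f(\alpha)/(2f(\alpha)-1)$, with the identical bookkeeping of the floor-count error $\Delta_f(1-x+1/\Delta_f)$ (resp.\ $\Delta_f$) in the lower bounds and the endpoint terms $\varphi(x\Delta_f)-\varphi(\Delta_f+1)$ (resp.\ $(1-o(1))e^{\alpha}$) in the upper bounds. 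Your account of where the $1/x^{(2f(\alpha)-1)/f(\alpha)}$ and $1/2^{(2f(\alpha)-1)/f(\alpha)}$ terms and the exponent $e^{\alpha f(\alpha)/(2f(\alpha)-1)}$ arise matches the paper's computation exactly.
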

\begin{corollary}
The number of nodes of an $(\alpha,\beta_f)$-PLG $G=(V,E)$ with $\beta_f=2-\frac{1}{f(\alpha )}$ 
satisfies 
\[|V|\:\in\: \left [(1-o(1)\cdot \frac{f(\alpha )}{f(\alpha )-1}\cdot e^{\alpha},\: (1-o(1)\cdot \frac{2f(\alpha )-1}{f(\alpha)-1}\cdot e^{\alpha}\right ]\] 
\end{corollary}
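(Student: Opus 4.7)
The plan is to derive the Corollary directly from the $|V|$-bounds in the preceding Lemma by simplifying the expressions under the hypothesis $f(\alpha)\to\infty$ as $\alpha\to\infty$. The key algebraic step is the identity $\frac{1}{1-1/f(\alpha)} = \frac{f(\alpha)}{f(\alpha)-1}$, which puts the leading factor in the form appearing on the right-hand side of the Corollary.

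Next I would verify that the bracketed expression $1 - e^{-\alpha(f(\alpha)-1)/(2f(\alpha)-1)}$ is of the form $1-o(1)$. Since $f(\alpha)\to\infty$, the ratio $(f(\alpha)-1)/(2f(\alpha)-1)$ converges to $1/2$, so the exponent behaves like $\alpha/2$ and hence diverges, making the exponential term vanish. At this point the Lemma's lower bound reads $(1-o(1))\frac{f(\alpha)}{f(\alpha)-1}e^{\alpha}-\Delta_f$ and its upper bound reads $(1-o(1))\frac{f(\alpha)}{f(\alpha)-1}e^{\alpha}+(1-o(1))e^{\alpha}$.

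For the lower bound I would absorb the $-\Delta_f$ correction into the $(1-o(1))$ factor. Writing $\Delta_f=e^{\alpha f(\alpha)/(2f(\alpha)-1)}=e^{\frac{\alpha}{2}\left(1+\frac{1}{2f(\alpha)-1}\right)}$, the exponent falls below $\alpha$ by an additive gap of order $\Theta(\alpha)$ whenever $f(\alpha)\to\infty$, so $\Delta_f/e^{\alpha}\to 0$ exponentially fast. The upper bound is obtained by combining the two summands: factoring out $(1-o(1))e^{\alpha}$ yields $(1-o(1))\bigl(\frac{f(\alpha)}{f(\alpha)-1}+1\bigr)e^{\alpha}$, and the identity $\frac{f(\alpha)}{f(\alpha)-1}+1=\frac{2f(\alpha)-1}{f(\alpha)-1}$ produces exactly the Corollary's right-hand expression.

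The proof amounts to careful asymptotic bookkeeping rather than genuinely new estimation, and the main subtle point is ensuring that the several $o(1)$ error terms compose correctly under the joint limit $\alpha,f(\alpha)\to\infty$. It is worth noting that the stronger hypothesis $f(\alpha)=o(\alpha)$ from the preceding Theorem is not actually needed at this step---$f(\alpha)\to\infty$ alone suffices---so this hypothesis is reserved for the PTAS construction that will consume the Corollary.
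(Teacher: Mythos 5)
Your proposal is correct and follows essentially the same route as the paper, which states the corollary without a separate proof as an immediate simplification of the preceding lemma: rewriting $\frac{1}{1-1/f(\alpha)}=\frac{f(\alpha)}{f(\alpha)-1}$, noting $e^{-\alpha\cdot\frac{f(\alpha)-1}{2f(\alpha)-1}}=o(1)$, absorbing $\Delta_f=e^{\alpha\cdot\frac{f(\alpha)}{2f(\alpha)-1}}=o(e^{\alpha})$ into the $(1-o(1))$ factor, and using $\frac{f(\alpha)}{f(\alpha)-1}+1=\frac{2f(\alpha)-1}{f(\alpha)-1}$. Your closing observation that only $f(\alpha)\to\infty$ is needed at this step, with $f(\alpha)=o(\alpha)$ reserved for the PTAS construction itself, is also accurate.
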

Now we will estimate volumes of node degree intervals. Given some $x\in (0,1)$, possibly depending on $\alpha$, we have
\begin{align*}
\text{vol}&[x\Delta_f,\Delta_f ] \in \left [\int_{x\Delta_f}^{\Delta_f+1}\frac{e^{\alpha}}{z^{\frac{f(\alpha )-1}{f(\alpha )}}}dz
                                        -(1-x)\Delta_f^2,\right.\\
  &  \qquad\qquad\qquad \left. \int_{x\Delta_f}^{\Delta_f+1}\frac{e^{\alpha}}{z^{\frac{f(\alpha )-1}{f(\alpha )}}}dz\: 
                 + \frac{e^{\alpha}}{(x\Delta_f)^{\frac{f(\alpha )-1}{f(\alpha )}}}
                 - \frac{e^{\alpha}}{(\Delta_f+1)^{\frac{f(\alpha )-1}{f(\alpha )}}}\right ]\\
  & \subseteq \left [ e^{\alpha}\left [\frac{z^{1-\frac{f(\alpha )-1}{f(\alpha )}}}{1-\frac{f(\alpha )-1}{f(\alpha )}} \right ]_{x\Delta_f}^{\Delta_f+1}
                        -(1-x)e^{\frac{2\alpha\cdot f(\alpha )}{2f(\alpha )-1}},\right.\\
  &   \qquad \left. e^{\alpha}\left [\frac{z^{1-\frac{f(\alpha )-1}{f(\alpha )}}}{1-\frac{f(\alpha )-1}{f(\alpha )}} \right ]_{x\Delta_f}^{\Delta_f+1}
                 +\frac{e^{\alpha\cdot\frac{f(\alpha )-1}{2f(\alpha )-1}}}{x^{\frac{f(\alpha )-1}{f(\alpha )}}}
                 -\frac{e^{\alpha\cdot\frac{f(\alpha )-1}{2f(\alpha )-1}}}{2^{\frac{f(\alpha )-1}{f(\alpha )}}}\right ]\\
  & \subseteq \left [ f(\alpha )e^{\alpha\cdot\frac{2f(\alpha )}{2f(\alpha )-1}}
                      \left (\left (1+\frac{1}{\Delta_f} \right )^{\nicefrac{1}{f(\alpha )}}-x^{\nicefrac{1}{f(\alpha )}} \right )\:
                 - (1-x)e^{\frac{2\alpha\cdot f(\alpha )}{2f(\alpha )-1}},\right.\\
  &   \qquad \left.  f(\alpha )e^{\alpha\cdot\frac{2f(\alpha )}{2f(\alpha )-1}}
                      \left (\left (1+\frac{1}{\Delta_f} \right )^{\nicefrac{1}{f(\alpha )}}-x^{\nicefrac{1}{f(\alpha )}} \right )\:
                 + e^{\alpha\cdot\frac{f(\alpha )-1}{2f(\alpha )-1}}\cdot
                      \left (\frac{1}{x^{\frac{f(\alpha )-1}{f(\alpha )}}}-\frac{1}{2^{\frac{f(\alpha )-1}{f(\alpha )}}} \right ) \right ]
\end{align*}
Similarly we obtain
\begin{align*}
\text{vol}[1, x\Delta_f] & \in \left [ e^{\alpha}\left [\frac{z^{1-\frac{f(\alpha )-1}{f(\alpha )}}}{f(\alpha )^{-1}}\right ]_{1}^{x\Delta_f+1}
                                 -(x-\Delta_f^{-1})e^{\frac{2\alpha\cdot f(\alpha )}{2f(\alpha )-1}},\right.\\
       &   \qquad \left.  e^{\alpha}\left [\frac{z^{1-\frac{f(\alpha )-1}{f(\alpha )}}}{f(\alpha )^{-1}}\right ]_{1}^{x\Delta_f+1}
                     +1-\frac{e^{\alpha\cdot\frac{f(\alpha )-1}{2f(\alpha )-1}}}{x^{\frac{f(\alpha )-1}{f(\alpha )}}}\right ]\\
       & \subseteq \left [ f(\alpha )e^{\alpha}\left ( x^{\frac{1}{f(\alpha)}}e^{\frac{\alpha}{2f(\alpha )}}-1\right )
                     -xe^{\frac{2\alpha\cdot f(\alpha )}{2f(\alpha )-1}},\right.\\
       &   \qquad \left. f(\alpha )e^{\alpha}\left (2^{\nicefrac{1}{f(\alpha )}}x^{\frac{1}{f(\alpha)}}e^{\frac{\alpha}{2f(\alpha )}}-1\right )
                     +1-\frac{e^{\alpha\cdot\frac{f(\alpha )-1}{2f(\alpha )-1}}}{x^{\frac{f(\alpha )-1}{f(\alpha )}}}\right ]
\end{align*}
Thus the number of edges of an $(\alpha,\beta_f)$-PLG is 
\begin{align*}
|E| & \in \left [ f(\alpha )e^{\alpha}\left ((\Delta_f+1)^{\frac{1}{f(\alpha )}}-1\right )\: 
                    -\left (1-\frac{1}{\Delta_f}\right )e^{\alpha\cdot\frac{2f(\alpha )}{2f(\alpha )-1}},\right.\\
    &  \qquad \left.  f(\alpha )e^{\alpha}\left ((\Delta_f+1)^{\frac{1}{f(\alpha )}}-1\right )\:
                    +e^{\alpha\cdot\frac{f(\alpha )-1}{2f(\alpha )-1}}
                     \cdot\left (\Delta_f^{\frac{f(\alpha )-1}{f(\alpha )}}-\frac{1}{2^{\frac{f(\alpha )-1}{f(\alpha )}}} \right )\right ]\\
    & \subseteq \left [(f(\alpha )-1)e^{\alpha\cdot\frac{2f(\alpha )}{2f(\alpha )-1}},\:
                         f(\alpha )e^{\alpha\cdot\frac{2f(\alpha )}{2f(\alpha )-1}} +e^{\alpha\cdot\frac{2f(\alpha )-2}{2f(\alpha )-1}}\right ]\\
    & \subseteq \left [(f(\alpha )-1)e^{\alpha\cdot\frac{2f(\alpha )}{2f(\alpha )-1}},\:
                         (f(\alpha )+1)e^{\alpha\cdot\frac{2f(\alpha )}{2f(\alpha )-1}}\right ]
\end{align*}
\emph{Concerning (2):}\\
Based on these estimates, we will now show how to choose the parameter $x$ such as to satisfy both conditions (1) and (2).
It turns out that this depends on the order of growth of the function $f(\alpha )$,
We observe that condition (2) is equivalent to
\[x^{\nicefrac{1}{f(\alpha )}}e^{\alpha\cdot\frac{2f(\alpha )}{2f(\alpha )-1}}-e^{\alpha }\: =\: 
  o\left (e^{\alpha\cdot\frac{2f(\alpha )}{2f(\alpha )-1}}-e^{\alpha} \right )\]
We have
$e^{\alpha\cdot\frac{2f(\alpha )}{2f(\alpha )-1}}-e^{\alpha}\: 
=\: e^{\alpha}\cdot \left (e^{\alpha\cdot\frac{1}{2f(\alpha )-1}}-1\right )\:=\colon\: T_{f,\alpha}$.
Now we may consider three cases:
\begin{itemize}
\item[(a)] If $f(\alpha )=o(\alpha )$, then $T_{f,\alpha}=\omega (1)$,
\item[(b)] If $f(\alpha )=\Theta (\alpha )$, then $T_{f,\alpha}=\Theta \left (e^{\alpha }\right )$,
\item[(b)] If $f(\alpha )=\omega (\alpha )$, then $T_{f,\alpha}= o\left (e^{\alpha }\right )$.
\end{itemize}
We consider the case (a). Then, in order to satisfy the condition (2), we have to choose $x$ such that 
$x^{1\slash f(\alpha )} = o(1)$ and moreover, 
\[\frac{1}{e^{\alpha\cdot\frac{f(\alpha )}{2f(\alpha )-1}}}\:\:\leq\:\:  x\]
Now we consider the requirement (1). We observe that, up to constant factors, the condition in (1) is equivalent to
\[\left (\frac{f(\alpha )}{f(\alpha )-1}\right )^2e^{\alpha\cdot\frac{2f(\alpha )}{2f(\alpha )-1}}\cdot
\left (\frac{1}{x^{\frac{f(\alpha )-1}{f(\alpha )}}}-1 \right )^2\:\leq\:
f(\alpha )\cdot\left (e^{\alpha\cdot\frac{2f(\alpha )}{2f(\alpha )-1}}-e^{\alpha}\right ),\]
which is, by rearranging terms, equivalent to
\[\left (\frac{f(\alpha )}{f(\alpha )-1}\right )^2\cdot e^{\alpha}\cdot e^{\frac{\alpha}{2f(\alpha )-1}}\cdot\left (\frac{1}{x^{\frac{f(\alpha )-1}{f(\alpha )}}}-1 \right )^2
\:\leq\: f(\alpha )e^{\alpha}\left (e^{\frac{\alpha}{2f(\alpha )-1}}-1\right )\]
Since $\frac{f(\alpha )}{f(\alpha )-1}=\Theta (1)$, this is, again up to constant factors, equivalent to
\[\left (\frac{1}{x^{\frac{f(\alpha )-1}{f(\alpha )}}}-1 \right )^2\:\:\: \leq\:\:\: \frac{(f(\alpha)-1)^2}{f(\alpha )},\]
which yields the requirement
\[x\:\:\geq\:\: \frac{f(\alpha )^{\frac{f(\alpha )}{2(f(\alpha )-1)}}}{\left ((1+o(1))f(\alpha ) \right )^{\frac{f(\alpha )}{f(\alpha )-1}}}
\: =\: (1-o(1))\cdot f(\alpha )^{\frac{f(\alpha )}{2(f(\alpha )-1)}-\frac{f(\alpha )}{f(\alpha )-1}}=f(\alpha )^{-\frac{f(\alpha )}{f(\alpha )-1}}\]
Thus we obtain that for $x=\nicefrac{1}{f(\alpha )}$, requirements (1) and (2) are satisfied, which concludes the proof of the theorem. \hspace{\fill}$\Box$
%

\section{The Case \texorpdfstring{$\bm{\beta >2}$}{beta>2}}\label{sec:beta>2}
The situation drastically changes when the power law exponent is a constant $\beta >2$. 
In this case, the number of nodes is $|V|=\zeta (\beta )e^{\alpha}$, and the number of edges 
is $\frac{1}{2}\zeta (\beta -1)e^{\alpha}$. We apply the Goemans-Williamson algorithm to the graph $G_{[2,\Delta ]}$ induced by 
the vertices of degree at least $2$ in $G$, and afterwards place all the edges incident to degree-1 nodes in the cut.
This yields a cut which has an expected inverse approximation ratio at least 
\[\frac{\alpha_{GW}\cdot (\frac{1}{2}\zeta (\beta -1)-1)e^{\alpha}+\frac{1}{2}e^{\alpha}}{\frac{1}{2}\zeta (\beta -1)e^{\alpha}}=
\frac{\alpha_{GW}\cdot (\frac{1}{2}\zeta (\beta -1)-1)+\frac{1}{2}}{\frac{1}{2}\zeta (\beta -1)},\]
where $\alpha_{GW}\approx 0.879$ denotes the inverse approximation ratio of the Goemans-Williamson algorithm.
This analysis can be refined as follows. There are $e^{\alpha}$ nodes of degree $1$.
Suppose there are $\mu\cdot e^{\alpha}$ nodes of degree $1$ which are incident to another degree-1 node. The remaining 
$(1-\mu )e^{\alpha}$ degree-1 nodes are incident to nodes of higher degree. The resulting lower bound on the expected inverse
approximation ratio
is then
\[\frac{\alpha_{GW}\cdot (\frac{1}{2}\zeta (\beta-1)-(1-\frac{\mu}{2}))+(1-\frac{\mu}{2})}{\frac{1}{2}\zeta (\beta-1)}
\geq\frac{\alpha_{GW}\cdot (\frac{1}{2}\zeta (\beta -1)-\frac{1}{2})+\frac{1}{2}}{\frac{1}{2}\zeta (\beta -1)},\]
where the lower bound is attained at $\mu=1$.

\section{Approximation Lower Bounds for \texorpdfstring{\bm{$\beta>2$}}{beta>2}}\label{sec:LowerBounds}
In this section we provide explicit approximation lower bounds for MAX-CUT in $(\alpha,\beta)$-PLG for $\beta >2$.
Explicit lower bounds for the approximability of MAX-CUT in degree $B$ bounded graphs and $B$-regular graphs have been 
obtained in a series of papers \cite{Berman1999,Berman2001}.
The authors showed, among other results, that the MAX-CUT problem restricted to $3$-regular graphs is NP-hard to approximate to within $(\nicefrac{152}{151})+\epsilon$.

Here we will make use of this result and the associated constructions in order to prove APX-hardness for MAX-CUT in $(\alpha,\beta)$-PLG for $\beta>2$.
Moreover we will show that even for $\beta\in (0,2]$, the problem remains NP-hard in the exact setting. It turns out that for $\beta >1$ this will be a direct
consequence from our reduction for the case $\beta >2$, while in the case $\beta\leq 1$ a different construction is needed. This is due to the fact that 
for $\beta >1$, the number of constant degree nodes in an $(\alpha,\beta )$-PLG is linear in the total number of vertices, while for $\beta \leq 1$ this is not 
true anymore.

The section is organized as follows. In the next paragraph we will describe a generic construction which reduces the MAX-CUT problem in $3$-regular graphs to 
MAX-CUT in $(\alpha,\beta )$-PLGs, based on an embedding of the former graphs into the later ones. 
Afterwards we will use this construction such as to obtain the APX-hardness of MAX-CUT in PLGs for $\beta >2$, with an explicit approximation lower bound
that only depends on the power law exponent $\beta$.
This also yields the NP-hardness for $\beta\in (1,2]$.
Finally we will describe in subsection \ref{np_hardness_subsection} a different embedding construction which yields the NP-hardness in the exact setting for 
$\beta \leq 1$.


\paragraph{The Construction}

We will now describe an embedding of $3$-regular graphs into power law graphs.
Starting from an instance $G$ of E3-MAX-CUT, we construct an $(\alpha,\beta)$-PLG instance $G'=G \cup W\cup M$ which is the disjoint union of $G$, a matching $M$
on a subset of the degree-1 nodes and a subgraph $W=W_2\cup W_4\cup\ldots\cup W_{\Delta}$, where $\Delta =\lfloor e^{\nicefrac{\alpha}{\beta }}\rfloor$ is the 
maximum degree of $G$. For each $i\not\in\{1,3\}$, $W_i$ is a multipath of degree $i$ nodes, 
namely a path whose edges have multiplicities alternating between $\lfloor\frac{i}{2}\rfloor$ and $\lceil\frac{i}{2}\rceil$,
with degree $1$ nodes attached to the two endpoints of this path.
$W_i$ contains all the degree $i$ nodes of $G$ and additionally up to $i$ degree-1 nodes.

Given the graph $G$ with $N$ vertices, we choose $\alpha$ smallest possible such that $N$ is not larger than the number of degree $3$ vertices in 
an $(\alpha,\beta )$-PLG, namely such that 
\[N=\left\lfloor\frac{e^{\alpha}}{3^{\beta}}\right\rfloor\:\in\:\left [\frac{e^{\alpha}}{3^{\beta}}-1,\:\frac{e^{\alpha}}{3^{\beta}}\right ] \]
Without loss of generality, we assume that $N=\frac{e^{\alpha}}{3^{\beta}}=\left\lfloor\frac{e^{\alpha}}{3^{\beta}}\right\rfloor$.
In the construction of the subgraphs $W_i$, we have to distinguish the following four cases.\\[1ex]
\emph{Case 1: $i$ is even and $\lfloor\frac{e^{\alpha}}{i^{\beta}}\rfloor$ is even.}
Then $W_i$ contains vertices $v_{i,j},j=1,\ldots n_i=\lfloor\frac{e^{\alpha}}{i^{\beta}}\rfloor$ of degree $i$ and 
additionally $i$ vertices of degree $1$. For $1<j<n_i$, vertex $v_{i,j}$ is adjacent to $v_{i,j-1}$ and $v_{i,j+1}$ by multi-edges
of multiplicity $\nicefrac{i}{2}$. $v_{i,1}$ and $v_{i,n_i}$ are adjacent to $v_{i,2}$ and $v_{i,n_i-1}$ respectively, each by multi-edges with multiplicity
$\nicefrac{i}{2}$. Additionally, each of these two nodes is adjacent to $\nicefrac{i}{2}$ nodes of degree $1$.
\begin{center}
 \begin{tikzpicture}[scale=1.6]

\foreach \x/\y in {0/a,1/b,3/c,4/d,5/e,6/f}{
 \node[vertex] (\y) at (\x,0) {};
}

\coordinate (b') at (2,0);
 \draw (c) edge[bend left=50,dash pattern=on 10pt off 2pt on 3pt off 2pt on 3pt off 2pt on 3pt off 100pt] (b');
 \draw (b) edge[bend right=50,dash pattern=on 10pt off 2pt on 3pt off 2pt on 3pt off 2pt on 3pt off 100pt] (b');
 \draw (c) edge[bend right=50,dash pattern=on 10pt off 2pt on 3pt off 2pt on 3pt off 2pt on 3pt off 100pt] (b');
 \draw (b) edge[bend left=50,dash pattern=on 10pt off 2pt on 3pt off 2pt on 3pt off 2pt on 3pt off 100pt] (b');
 \draw (b) edge[dash pattern=on 10pt off 2pt on 3pt off 2pt on 3pt off 2pt on 3pt off 100pt] (b');
 \draw (c) edge[dash pattern=on 10pt off 2pt on 3pt off 2pt on 3pt off 2pt on 3pt off 100pt] (b');

\draw (a) -- (b);
\draw (e) -- (f);
\draw (c) -- (d);
\draw (d) -- (e);

\foreach \x/\y in {a/b,c/d,d/e,e/f}{
 \draw (\x) edge[bend left=50] (\y);
 \draw (\x) edge[bend right=50] node [] (\x') {} (\y);
}

\foreach \x/\y in {.5/.5,.5/0,.5/-.5}{
 \draw (f) edge[] node[pos=1,vertex] {} +(\x,\y);
}
\foreach \x/\y in {.5/.5,.5/0,.5/-.5}{
 \draw (a) edge[] node[pos=1,vertex] {} +(-\x,-\y);
}


\draw[very thick,dots,xshift=5.45cm] +(0,.33) edge[bend left] node[pos=1](2){} +(0,-.33);
\draw[very thick,dots,xshift=6.2cm] +(0,.33) edge[bend left] +(0,-.33);
\draw[very thick,dots,xshift=-.25cm] +(0,.33) edge[bend right] +(0,-.33);

\draw[] ($(2)+(-.5,-.25)$) edge[-latex,in=-110,out=90] node[pos=-0,yshift=-10] {$\frac{i}{2}$}(2.center);

\draw[decorate,decoration={brace,raise=7pt}] ($ (f) + (.5,.6) $) -- node[right=10pt]{$\frac{i}{2}$} ($ (f) + (.5,-.6) $);
\draw[decorate,decoration={brace,raise=7pt,mirror}] ($ (a) + (-.5,.6) $) -- node[left=10pt]{$\frac{i}{2}$} ($ (a) + (-.5,-.6) $);

\end{tikzpicture}
\end{center}
\emph{Case 2: $i$ is even and $\lfloor\frac{e^{\alpha}}{i^{\beta}}\rfloor$ is odd.}
The construction is the same as in case 1.\\[1ex]
\emph{Case 3: $i$ is odd and $\lfloor\frac{e^{\alpha}}{i^{\beta}}\rfloor$ is even.}
Again $W_i$ has vertices $v_{i,j},j=1,\ldots n_i=\lfloor\frac{e^{\alpha}}{i^{\beta}}\rfloor$ of degree $i$ and
now additionally $i-1$ vertices of degree $1$. For $j=3,\ldots , n_i$, vertex $v_{i,j}$ is adjacent to $v_{i,j-1}$ with multiplicity $\lceil\frac{i}{2}\rceil$,
and for $j=1,3,\ldots n_i-2$ to $v_{i,j+1}$ with multiplicity $\lfloor\frac{i}{2}\rfloor$. Additionally $v_{i,1}$ has $\lfloor\frac{i}{2}\rfloor$ neighbors of degree $1$
and $v_{i,n_i}$ has $\lfloor\frac{i}{2}\rfloor$ degree $1$ neighbors. Thus $W_i$ contains $i-1$ nodes of degree $1$.
\begin{center}
 \begin{tikzpicture}[scale=1.6]

\foreach \x/\y in {0/a,1/b,3/c,4/d,5/e,6/f}{
 \node[vertex] (\y) at (\x,0) {};
}

\coordinate (b') at (2,0);
 \draw (c) edge[bend left=50,dash pattern=on 10pt off 2pt on 3pt off 2pt on 3pt off 2pt on 3pt off 100pt] (b');
 \draw (b) edge[bend right=50,dash pattern=on 10pt off 2pt on 3pt off 2pt on 3pt off 2pt on 3pt off 100pt] (b');
 \draw (c) edge[bend right=50,dash pattern=on 10pt off 2pt on 3pt off 2pt on 3pt off 2pt on 3pt off 100pt] (b');
 \draw (b) edge[bend left=50,dash pattern=on 10pt off 2pt on 3pt off 2pt on 3pt off 2pt on 3pt off 100pt] (b');

\draw (a) -- (b);
\draw (e) -- (f);
\draw (c) -- (d);

\foreach \x/\y in {a/b,c/d,d/e,e/f}{
 \draw (\x) edge[bend left=50] (\y);
 \draw (\x) edge[bend right=50] node [] (\x') {} (\y);
}


\foreach \x/\y in {.5/.5,.5/-.5}{
 \draw (a) edge[] node[pos=1,vertex] {} +(-\x,-\y);
}
\foreach \x/\y in {.5/.5,.5/-.5}{
 \draw (f) edge[] node[pos=1,vertex] {} +(\x,\y);
}

\draw[very thick,dots,xshift=4.45cm] +(0,.33) edge[bend left] node[pos=1](1){} +(0,-.33);
\draw[very thick,dots,xshift=5.45cm] +(0,.33) edge[bend left] node[pos=1](2){} +(0,-.33);
\draw[very thick,dots,xshift=6.2cm] +(0,.33) edge[bend left] +(0,-.33);
\draw[very thick,dots,xshift=-.25cm] +(0,.33) edge[bend right] +(0,-.33);

\draw[] ($(1)+(-1,-.25)$) edge[-latex,in=-110,out=90] node[pos=-0,yshift=-10] {$\lfloor\frac{i}{2}\rfloor$}(1.center);
\draw[] ($(2)+(-.5,-.25)$) edge[-latex,in=-110,out=90] node[pos=-0,yshift=-10] {$\lceil\frac{i}{2}\rceil$}(2.center);

\draw[decorate,decoration={brace,raise=7pt}] ($ (f) + (.5,.6) $) -- node[right=10pt]{$\lfloor\frac{i}{2}\rfloor$} ($ (f) + (.5,-.6) $);
\draw[decorate,decoration={brace,raise=7pt,mirror}] ($ (a) + (-.5,.6) $) -- node[left=10pt]{$\lfloor\frac{i}{2}\rfloor$} ($ (a) + (-.5,-.6) $);

\end{tikzpicture}
\end{center}
\emph{Case 4: $i$ is odd and $\lfloor\frac{e^{\alpha}}{i^{\beta}}\rfloor$ is odd.}
In this case, $v_{i,1}$ is adjacent to $\lceil\frac{i}{2}\rceil$ degree $1$ nodes and to $v_{i,2}$ with multiplicity $\lfloor\frac{i}{2}\rfloor$.
The vertex $v_{i,n_i}$ has $\lfloor\frac{i}{2}\rfloor$ degree $1$ neighbors and is adjacent to $v_{i,n_i-1}$ with multiplicity $\lceil\frac{i}{2}\rceil$.
Thus $W_i$ contains $\lfloor\frac{i}{2}\rfloor+\lceil\frac{i}{2}\rceil=i$ nodes of degree $1$.
\begin{center}
 \begin{tikzpicture}[scale=1.6]

\foreach \x/\y in {-1/o,0/a,1/b,3/c,4/d,5/e,6/f}{
 \node[vertex] (\y) at (\x,0) {};
}

\coordinate (b') at (2,0);
 \draw (c) edge[bend left=50,dash pattern=on 10pt off 2pt on 3pt off 2pt on 3pt off 2pt on 3pt off 100pt] (b');
 \draw (b) edge[bend right=50,dash pattern=on 10pt off 2pt on 3pt off 2pt on 3pt off 2pt on 3pt off 100pt] (b');
 \draw (c) edge[bend right=50,dash pattern=on 10pt off 2pt on 3pt off 2pt on 3pt off 2pt on 3pt off 100pt] (b');
 \draw (b) edge[bend left=50,dash pattern=on 10pt off 2pt on 3pt off 2pt on 3pt off 2pt on 3pt off 100pt] (b');

\draw (a) -- (b);
\draw (e) -- (f);
\draw (c) -- (d);

\foreach \x/\y in {o/a,a/b,c/d,d/e,e/f}{
 \draw (\x) edge[bend left=50] (\y);
 \draw (\x) edge[bend right=50] node [] (\x') {} (\y);
}

\foreach \x/\y in {.5/.5,.5/0,.5/-.5}{
 \draw (o) edge[] node[pos=1,vertex] {} +(-\x,-\y);
}

\foreach \x/\y in {.5/.5,.5/-.5}{
 \draw (f) edge[] node[pos=1,vertex] {} +(\x,\y);
}

\draw[very thick,dots,xshift=4.45cm] +(0,.33) edge[bend left] node[pos=1](1){} +(0,-.33);
\draw[very thick,dots,xshift=5.45cm] +(0,.33) edge[bend left] node[pos=1](2){} +(0,-.33);
\draw[very thick,dots,xshift=6.2cm] +(0,.33) edge[bend left] +(0,-.33);
\draw[very thick,dots,xshift=-1.25cm] +(0,.33) edge[bend right] +(0,-.33);

\draw[] ($(1)+(-1,-.25)$) edge[-latex,in=-110,out=90] node[pos=-0,yshift=-10] {$\lfloor\frac{i}{2}\rfloor$}(1.center);
\draw[] ($(2)+(-.5,-.25)$) edge[-latex,in=-110,out=90] node[pos=-0,yshift=-10] {$\lceil\frac{i}{2}\rceil$}(2.center);

\draw[decorate,decoration={brace,raise=7pt}] ($ (f) + (.5,.6) $) -- node[right=10pt]{$\lfloor\frac{i}{2}\rfloor$} ($ (f) + (.5,-.6) $);
\draw[decorate,decoration={brace,raise=7pt,mirror}] ($ (o) + (-.5,.6) $) -- node[left=10pt]{$\lceil\frac{i}{2}\rceil$} ($ (o) + (-.5,-.6) $);

\end{tikzpicture}
\end{center}
Finally $M$ is a perfect matching on $\lfloor e^{\alpha}\rfloor -\sum_{i=4}^{\Delta}i+O(\Delta )=e^{\alpha}-\Theta (\Delta^2)=(1-o(1))e^{\alpha}$
vertices of degree $1$.

Our approximation lower bounds for MAX-CUT in $(\alpha , \beta )$-PLG will be based on the fact that a maximum cut in the subgraph $W$ contains
all the edges of $W$ and can be constructed efficiently.
\begin{lemma}\label{lower_bound_lemma}
Given a $3$-regular graph $G$ with $N$ vertices, let $W\cup M$ be the auxiliary subgraph described above. Then $W\cup M$ has the following properties.
\begin{itemize}
\item[(a)] The number of vertices is $|V(W\cup M)|\: =\: (1-o(1))\left (\zeta (\beta )-3^{-\beta}\right )e^{\alpha}$.
\item[(b)] The number of edges is $|E(W\cup M)|\: =\: (1-o(1))\left (\frac{1}{2}\zeta (\beta -1)-\frac{1}{2\cdot 3^{\beta -1}}\right )e^{\alpha}$.
\item[(c)] The cut size is $\mbox{MAX-CUT}(W\cup M)=|E(W\cup M)|$, and such a cut can be constructed in polynomial time.
\end{itemize}
\end{lemma}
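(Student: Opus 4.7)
The plan is to verify the three parts in order, all relying on the structural description of the $W_i$'s and of $M$ together with the asymptotic estimate $\sum_i i^{-\beta}=\zeta(\beta)$ and, crucially for absorbing error terms, the fact that for $\beta>2$ the maximum degree satisfies $\Delta^2=e^{2\alpha/\beta}=o(e^\alpha)$.

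For part (a), I would observe that by construction $V(G')$ is the disjoint union $V(G)\sqcup V(W\cup M)$, where $V(G)$ consists of exactly the $N=\lfloor e^\alpha/3^\beta\rfloor$ degree-$3$ vertices prescribed by the PLG model and $V(W\cup M)$ collects every other vertex of the PLG. Writing the total PLG vertex count as $\sum_{i=1}^\Delta\lfloor e^\alpha/i^\beta\rfloor$ and subtracting the $i=3$ summand gives
\[
|V(W\cup M)|=\sum_{i\neq 3}\left\lfloor\frac{e^\alpha}{i^\beta}\right\rfloor=(\zeta(\beta)-3^{-\beta})e^\alpha+O(\Delta),
\]
and $O(\Delta)=o(e^\alpha)$ yields the claimed $(1-o(1))$ bound.

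For part (b), I would apply the handshake lemma to $W\cup M$. Every vertex of $W\cup M$ was constructed to carry exactly its intended PLG degree (the multipaths realize the degrees $i\geq 2$, $i\neq 3$, and the matching plus the leaves attached to the $W_i$ endpoints absorb all degree-$1$ vertices). Hence
\[
2|E(W\cup M)|=\sum_{i\neq 3}i\left\lfloor\frac{e^\alpha}{i^\beta}\right\rfloor=\bigl(\zeta(\beta-1)-3^{1-\beta}\bigr)e^\alpha+O(\Delta^2),
\]
where the $O(\Delta^2)$ term collects the floor truncations and the $O(\sum_{i=2}^\Delta i)$ leaf contribution. Since $\Delta^2=o(e^\alpha)$, dividing by $2$ gives part (b).

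For part (c), I would argue that $W\cup M$ is bipartite and therefore has maximum cut equal to $|E(W\cup M)|$, with the cut constructible in polynomial time. Each $W_i$, regarded as an underlying simple graph, is a path on $n_i$ vertices with a small set of degree-$1$ leaves attached at its two endpoints; multiplying edges only creates multi-cycles of length $2$, which are even, so bipartiteness is preserved. A proper $2$-coloring is obtained by alternating colors $v_{i,1},v_{i,2},\ldots,v_{i,n_i}$ along the path and assigning each leaf the color opposite to its neighbor, and this $2$-coloring cuts every multi-edge. The matching $M$ is trivially bipartite. Running BFS on each connected component produces the bipartition in time polynomial in $e^\alpha=\Theta(N)$, which is the correct complexity measure since all vertex, edge, and multiplicity counts are polynomially bounded in $N$.

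The main obstacle is not any hard estimate but the careful bookkeeping in part (c) combined with part (b): one must verify that the $O(\Delta^2)$ leftover degree-$1$ leaves assigned to endpoints of the various $W_i$ still leave $(1-o(1))e^\alpha$ degree-$1$ vertices available for $M$, so that $M$ can indeed be a perfect matching and the degree sequence of $W\cup M$ matches the PLG degree sequence restricted to $\{i\neq 3\}$. Once this accounting is settled, the three parts follow from $\zeta$-function asymptotics together with $\Delta^2=o(e^\alpha)$.
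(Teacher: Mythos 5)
Your proposal is correct and takes essentially the same route as the paper: parts (a) and (b) are direct degree-sequence counts (the paper states they ``follow directly from the construction,'' and your floor-error and zeta-tail bookkeeping with $\Delta^2=e^{2\alpha/\beta}=o(e^\alpha)$ for $\beta>2$ is exactly the computation being elided), and your bipartiteness argument for (c) is the paper's alternating placement of the $v_{i,j}$ along each multipath with the degree-$1$ leaves placed opposite their neighbors, phrased as a $2$-coloring of the underlying trees. The leaf-accounting concern you flag is likewise resolved as in the paper, since only $O(\Delta^2)=o(e^\alpha)$ degree-$1$ vertices are absorbed by the $W_i$'s, leaving $(1-o(1))e^\alpha$ of them for $M$.
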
\begin{proof}
Parts (a) and (b) follow directly from the construction. Concerning (c), we place for each $W_i$ vertices $v_{i,j}$ on the left hand side of the cut if $j$ is odd and on
the right hand side if $j$ is even. Then we place all the degree $1$ nodes optimally. This concludes the proof of the lemma.
\end{proof}
\paragraph{Explicit Lower Bounds for \texorpdfstring{$\bm{\beta >2}$}{${\beta >2}$}}

We start from the following approximation hardness result for MAX-CUT in $3$-regular graphs.
\begin{theorem}[\cite{Berman2001}]\label{bk_theorem}
For every $\epsilon \in (0,\nicefrac{1}{302})$, it is NP-hard to decide whether an instance of E3-MAX-CUT with $156n$ edges and $104n$ vertices 
has a maximum cut of size above $(152-\epsilon)n$ or below $(151+\epsilon)n$. Thus the MAX-CUT problem in $3$-regular graphs is NP-hard to approximate within
any ratio $(\nicefrac{152}{151}) -\epsilon$. 
\end{theorem}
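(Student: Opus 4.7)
The plan is to follow the standard blueprint for MAX-CUT inapproximability on bounded-degree graphs, namely a gap-preserving reduction from a known hard constraint satisfaction problem, combined with a carefully designed gadget and a regularization step. Since the final statement is attributed to \cite{Berman2001}, I would reconstruct it along those lines rather than invent a new proof, focusing on where the explicit constants $156n$, $104n$, $152$, $151$ come from.

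First, I would start from an inapproximability result for a bounded-occurrence constraint satisfaction problem, for instance Håstad's hardness of MAX-E3-LIN-2, or an occurrence-bounded version of MAX-3-SAT obtained by expander replacement. The purpose of the bounded-occurrence hypothesis is twofold: it guarantees that the reduction will preserve the gap up to additive $\epsilon$ losses, and it ensures that every variable appears a constant number of times, which is the starting point for producing a $3$-regular target graph. Amplification (taking $k$-fold products or using the PCP theorem in its standard bounded-degree form) is invoked to make the gap between the completeness and soundness as sharp as needed for the eventual $152$ vs. $151$ separation.

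Second, I would design a MAX-CUT gadget that implements each constraint of the source problem by a small graph component attached to vertices representing the variables. The gadget is chosen so that (i) \emph{completeness}: any satisfying assignment lifts to a cut whose weight equals a fixed multiple of the number of satisfied constraints plus a constant gadget background, and (ii) \emph{soundness}: every cut in the reduced graph can be ``decoded'' back into an assignment whose number of satisfied constraints is at least a specific linear function of the cut value. Verifying this last property is where the work actually lives: it is typically done by exhaustive case analysis (or an LP over all local cut patterns, in the sense of Trevisan--Sorkin--Sudan--Williamson), and it fixes the precise integers that will later appear in the bound.

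Third, to enforce exact $3$-regularity I would plug in small degree-correction gadgets (short cycles, or expander-based constant-sized components) that raise every low-degree vertex to degree $3$ without introducing cheap additional cut edges; a careful design keeps the gap additive error within the allowed $\epsilon$-slack. Counting vertices and edges of each gadget copy, multiplied by the number of variables and clauses in the source instance, yields exactly the $104n$ vertices and $156n$ edges claimed. The completeness–soundness spread of the gadget, translated through this counting, becomes the ratio $152n$ versus $151n$. The main obstacle in any such proof is the gadget construction and its tightness analysis: producing a $3$-regular gadget whose local gap is large enough to survive the $\epsilon$-loss from the initial PCP and the regularization step, while keeping the combinatorics simple enough that one can actually compute the $152/151$ ratio in closed form, is the step where all the quantitative strength of the theorem is concentrated.
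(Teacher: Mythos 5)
First, a point of reference: the paper does not prove this statement at all --- it is imported verbatim from \cite{Berman2001} as the starting point for the reduction in Section 6, so there is no in-paper proof to compare against. Judged on its own terms, your proposal correctly identifies the route that Berman and Karpinski actually take (start from H{\aa}stad-type hardness of a bounded-occurrence CSP such as E3-LIN-2 made occurrence-bounded via amplifier/expander constructions, translate constraints through local MAX-CUT gadgets whose soundness is verified by case analysis, and regularize to degree exactly $3$), but it is a proof \emph{plan}, not a proof. The genuine gap is exactly the one you flag yourself in your final sentence: the gadget is never constructed, the local completeness--soundness analysis is never carried out, and consequently none of the explicit constants --- $104n$ vertices, $156n$ edges, the $(152-\epsilon)n$ versus $(151+\epsilon)n$ gap, and the resulting ratio $\nicefrac{152}{151}$ --- is derived or even made plausible by a computation. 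Since this theorem is used by the paper precisely for its explicit constants (they propagate directly into the lower bound $\frac{(3^{\beta}\zeta(\beta-1)-3)\cdot 52+152}{(3^{\beta}\zeta(\beta-1)-3)\cdot 52+151}-\epsilon$ for $\beta>2$), a sketch that leaves the constants unspecified does not establish the statement as stated.

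Two further substantive cautions on the parts you do commit to. First, generic gap amplification by $k$-fold products or a black-box invocation of the PCP theorem would \emph{destroy} the explicit constants: the BK numbers arise from a fixed, hand-crafted chain of reductions with fixed gadget sizes, and any product construction changes the vertex/edge counts and the gap in a way incompatible with the $156n$/$104n$ normalization. Second, the statement is a promise-problem (decision-gap) formulation, which requires the reduction to output instances whose optimum is provably either above $(152-\epsilon)n$ or below $(151+\epsilon)n$; this two-sided control is again a consequence of the exact gadget accounting, not of the general blueprint. To close the gap you would need to exhibit the concrete gadgets, perform the exhaustive (or LP-based) soundness analysis over local cut patterns, and carry the bookkeeping through the regularization step until the stated integers appear --- which is, as you correctly observe, where all the quantitative strength of the theorem is concentrated, and which is precisely what is missing.
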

We consider now such a $3$-regular graph $G$ with $N=104n$ vertices and construct the 
associated $(\alpha,\beta )$-PLG $G'=G\cup W\cup M$. Due to \autoref{lower_bound_lemma}, the graph $G'$ contains 
$(1-o(1))\left (\zeta (\beta )-3^{-\beta}\right )\cdot 3^{\beta}\cdot N+N$ vertices, and the auxiliary subgraph $W\cup M$ contains 
$(1-o(1))\left (\frac{1}{2}\zeta (\beta -1)-\frac{1}{2\cdot 3^{\beta -1}}\right )\cdot 3^{\beta}\cdot N$ edges. It is now NP-hard to decide if 
\[\mbox{MAX-CUT}(G')\geq (1-o(1))\left (\frac{1}{2}\zeta (\beta -1)-\frac{1}{2\cdot 3^{\beta -1}}\right )\cdot 3^{\beta}\cdot 104n +(152-\epsilon )n\]
or
\[\mbox{MAX-CUT}(G')\leq (1-o(1))\left (\frac{1}{2}\zeta (\beta -1)-\frac{1}{2\cdot 3^{\beta -1}}\right )\cdot 3^{\beta}\cdot 104n +(151+\epsilon )n\]  
Simplifying terms, we obtain the following result.
\begin{theorem}
For every $\beta >2$
and $\epsilon >0$,
MAX-CUT in $(\alpha,\beta )$-Power Law Graphs is NP-hard to approximate within a ratio
$\frac{(3^{\beta} \zeta (\beta -1)-3)\cdot 52+152}{(3^{\beta} \zeta (\beta -1)-3)\cdot 52+151}-\epsilon$.
\end{theorem}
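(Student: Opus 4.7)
The plan is to combine the Berman--Karpinski hardness gap for MAX-CUT on $3$-regular graphs (\autoref{bk_theorem}) with the embedding construction $G \mapsto G' = G \cup W \cup M$ given before the statement, and then to read off the inapproximability ratio from the sizes in \autoref{lower_bound_lemma}.

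First I would take a $3$-regular instance $G$ on $N = 104n$ vertices and $156n$ edges for which \autoref{bk_theorem} asserts it is NP-hard to distinguish the case $\text{MAX-CUT}(G) \geq (152-\epsilon)n$ from the case $\text{MAX-CUT}(G) \leq (151+\epsilon)n$. I then apply the embedding, choosing $\alpha$ as the smallest value with $N = \lfloor e^{\alpha}/3^{\beta}\rfloor$; by the paragraph preceding the theorem we may assume $N = e^{\alpha}/3^{\beta}$. This pins down the $(\alpha,\beta)$-PLG $G' = G \cup W \cup M$. Since $G$ and $W\cup M$ are vertex-disjoint, MAX-CUT decomposes additively:
\[\text{MAX-CUT}(G') \;=\; \text{MAX-CUT}(G) \,+\, \text{MAX-CUT}(W \cup M).\]

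Next I invoke \autoref{lower_bound_lemma}(c): every edge of $W \cup M$ can be cut simultaneously, so $\text{MAX-CUT}(W \cup M) = |E(W \cup M)|$, and by part (b) this equals $(1-o(1))\bigl(\tfrac{1}{2}\zeta(\beta-1) - \tfrac{1}{2\cdot 3^{\beta-1}}\bigr) e^{\alpha}$. Substituting $e^{\alpha} = 3^{\beta} N = 104\cdot 3^{\beta}\, n$ gives $\text{MAX-CUT}(W\cup M) = (1-o(1))\cdot 52n\cdot (3^{\beta}\zeta(\beta-1) - 3) =: A$.

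Finally, the NP-hardness gap transfers verbatim: in the YES case $\text{MAX-CUT}(G') \geq A + (152-\epsilon)n$ and in the NO case $\text{MAX-CUT}(G') \leq A + (151+\epsilon)n$. Dividing yields an inapproximability ratio $(A + 152n)/(A + 151n)$, which after absorbing the $(1-o(1))$ factor and the original $\epsilon$ into a single $\epsilon'>0$ simplifies to the stated bound. The only non-bookkeeping ingredient is the fact that every edge of each multi-path $W_i$ (together with the matching $M$) can be cut at once; this is the easy observation that each $W_i$ is bipartite, witnessed by placing $v_{i,j}$ on the left if $j$ is odd and on the right if $j$ is even, and then placing each attached degree-$1$ neighbour on the opposite side of its unique neighbour. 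The only mild obstacle is the algebraic simplification turning $(A+152n)/(A+151n)$ into the closed form $\frac{(3^{\beta}\zeta(\beta-1)-3)\cdot 52 + 152}{(3^{\beta}\zeta(\beta-1)-3)\cdot 52 + 151}$, which is routine once $A$ has been expressed in the form above.
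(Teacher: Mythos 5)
Your proposal is correct and follows essentially the same route as the paper: reduce from the Berman--Karpinski gap instance via the embedding $G\mapsto G'=G\cup W\cup M$, use \autoref{lower_bound_lemma} to compute $\mbox{MAX-CUT}(W\cup M)=|E(W\cup M)|=(1-o(1))\cdot 52n\left(3^{\beta}\zeta(\beta-1)-3\right)$, and transfer the gap additively. Your explicit remarks on the additive decomposition over disjoint components and the bipartiteness of each $W_i$ (alternating placement of $v_{i,j}$, degree-$1$ nodes opposite their neighbours) merely spell out what the paper leaves implicit in the proof of \autoref{lower_bound_lemma}(c).
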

As a byproduct of the proof of the previous theorem we also obtain the following result.
\begin{corollary}
For $1<\beta \leq 2$, MAX-CUT in $(\alpha,\beta )$-Power Law Graphs is NP-hard.
\end{corollary}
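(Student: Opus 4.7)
The plan is to reuse the embedding from the proof of the preceding theorem as a Karp reduction from exact MAX-CUT on $3$-regular graphs, which is NP-hard by \autoref{bk_theorem}. Given a $3$-regular graph $G$ on $N$ vertices, I would choose $\alpha$ with $N=\lfloor e^{\alpha}/3^{\beta}\rfloor$ so that the vertices of $G$ occupy the degree-$3$ class of an $(\alpha,\beta)$-PLG, and build $G'=G\cup W\cup M$ by filling the remaining degree classes with the auxiliary components $W_i$ and a matching $M$ on the surplus degree-$1$ vertices. For any fixed $\beta>1$ both $|V(G')|=\Theta(e^{\alpha})=\Theta(N)$ and $|E(G')|=\Theta(e^{2\alpha/\beta})=O(N^{2/\beta})$ are polynomial in $N$, so the construction runs in polynomial time. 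By \autoref{lower_bound_lemma}(c) the auxiliary graph $W\cup M$ has $\mbox{MAX-CUT}(W\cup M)=|E(W\cup M)|$ and an optimal cut is constructible in polynomial time; hence
\[
\mbox{MAX-CUT}(G')\;=\;\mbox{MAX-CUT}(G)\;+\;|E(W\cup M)|,
\]
so deciding the left-hand side exactly is equivalent to deciding $\mbox{MAX-CUT}(G)$ exactly, giving the claimed NP-hardness.

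The one point that behaves differently from the case $\beta>2$, and which I expect to be the main obstacle, is the feasibility of $W$: the total degree-$1$ demand at the endpoints of the multi-paths $W_i$ is $\Theta(\Delta^2)=\Theta(e^{2\alpha/\beta})$, while only $y_1=\lfloor e^{\alpha}\rfloor$ degree-$1$ vertices are available, and the two quantities cross over exactly at $\beta=2$. For $\beta=2$ they coincide up to a constant factor and the residual matching $M$ still exists, so the construction of the preceding theorem applies verbatim. For $\beta\in(1,2)$ I would replace the multi-path $W_i$ by a multi-cycle on the $y_i$ degree-$i$ vertices for the top degree classes where the endpoint budget would otherwise be violated. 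A multi-cycle with alternating edge multiplicities $\lfloor i/2\rfloor,\lceil i/2\rceil$ realizes degree $i$ at every vertex while consuming no degree-$1$ slot, and its max-cut equals the sum of all edge multiplicities, minus at most one such multiplicity when $y_i$ is odd; in either case it is trivially computable and attainable in polynomial time. A constant number of vertex swaps between adjacent degree classes suffices to fix any remaining parity obstruction needed to form $M$.

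With these modifications, \autoref{lower_bound_lemma}(c) continues to hold with a polynomial-time computable additive constant in place of $|E(W\cup M)|$, and the additive identity above is preserved, so the Karp reduction goes through for every fixed $\beta\in(1,2]$.
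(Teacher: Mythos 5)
Your overall route coincides with the paper's: the paper's entire proof of this corollary is the one-line assertion that the embedding $G\mapsto G'=G\cup W\cup M$ from the $\beta>2$ construction ``is well defined'' for all $\beta>1$, followed by the same shifted threshold $\mbox{MAX-CUT}(G')>(152-\epsilon)n+\frac{1}{2}\sum_{j=1}^{\Delta}j\lfloor e^{\alpha}/j^{\beta}\rfloor-156n$ that you derive. Your budget computation is correct and flags something the paper silently glosses over: the multipaths $W_i$ consume $\Theta(i)$ degree-$1$ pendants each, hence $\Theta(\Delta^2)=\Theta(e^{2\alpha/\beta})$ in total, against a supply of $y_1=\lfloor e^{\alpha}\rfloor$, and these quantities cross exactly at $\beta=2$ (at $\beta=2$ the demand is $\sim\frac{1}{2}e^{\alpha}$, so the construction and \autoref{lower_bound_lemma} still apply; for $1<\beta<2$ they do not apply verbatim). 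So a repair along your lines --- or the wheel-based construction the paper itself introduces in \autoref{np_hardness_subsection} for $\beta\leq 1$, which works equally well here --- is genuinely needed, and on this point your write-up is more careful than the paper's.

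However, your repair has a concrete gap at exactly the point where the paper's cycle construction does extra work: the \emph{critical} degrees. If $i$ is odd and $y_i=\lfloor e^{\alpha}/i^{\beta}\rfloor$ is odd, then the total degree $i\cdot y_i$ is odd, so by the handshake lemma \emph{no} multigraph on precisely those $y_i$ vertices with every degree equal to $i$ exists; in particular your claim that the alternating multicycle ``realizes degree $i$ at every vertex,'' with only a max-cut loss when $y_i$ is odd, conflates the odd-cycle cut deficiency (which is the only issue when $i$ is even) with this realizability failure. There are in general $\Theta(\Delta)$ such critical classes, not $O(1)$, so your closing fix --- ``a constant number of vertex swaps between adjacent degree classes'' --- fails twice over: the number of problematic classes is unbounded, and moving vertices between degree classes changes the $y_i$, destroying the property that $G'$ is an $(\alpha,\beta)$-PLG, whose degree sequence is prescribed exactly. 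Two legitimate repairs: pair up critical classes and join the two deficient vertices by a single edge of multiplicity $1$ (the paper's $W_{i,j}$ gadget from \autoref{np_hardness_subsection}), or attach one degree-$1$ pendant to one vertex of each critical cycle, which costs only $\Theta(\Delta)=\Theta(e^{\alpha/\beta})=o(e^{\alpha})$ degree-$1$ vertices and is affordable for every $\beta>1$. With that substitution your additive-threshold Karp reduction goes through as stated; note also the minor slip that at $\beta=2$ one has $|E(G')|=\Theta(\alpha e^{\alpha})=\Theta(N\log N)$ rather than $O(N^{2/\beta})=O(N)$, which is harmless for polynomiality.
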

\begin{proof}
Theorem \ref{bk_theorem} yields that the following decision problem is NP-complete: Given a $3$-regular graph $G$ with $104n$ vertices, is 
$\mbox{MAX-CUT}(G)>(152-\epsilon)n$? For $\beta >1$, our reduction $G\mapsto G'=G\cup W\cup M$ is well defined and reduces this to the decision problem 
if $\mbox{MAX-CUT}(G')>(152-\epsilon )n+\frac{1}{2}\sum_{j=1}^{\Delta}j\cdot\lfloor\frac{e^{\alpha}}{j^{\beta}}\rfloor-156n$, where $e^{\alpha}=3^{\beta}\cdot 104n$.
\end{proof}
{\bf Remark.} The reduction is not well-defined anymore for $\beta\leq 1$, since in that case the number of degree $1$ nodes does not suffice to construct the 
subgraphs $W_i,i=2,4,5,\ldots ,\Delta$ in the way as described before. In the next subsection we will provide an alternative reduction which also 
yields the NP-hardness for $\beta \leq 1$.

\subsection{NP-Hardness for \texorpdfstring{$\bm{\beta \leq 1}$}{beta <= 1}}\label{np_hardness_subsection}
In order to prove NP-hardness of the MAX-CUT problem in $(\alpha,\beta )$-power law graphs for $\beta \leq 1$, we construct again a polynomial time reduction from
the $3$-regular MAX-CUT. We consider first the case $\beta <1$. Then the number of degree $1$ nodes is still equal to $\lfloor e^{\alpha}\rfloor$, while the 
total number of nodes is $(1-o(1))\frac{e^{\nicefrac{\alpha}{\beta}}}{1-\beta}=\omega (e^{\alpha})$. Thus it is even not possible to spend one degree $1$ node 
per subgraph $W_i$, since we also have $e^{\alpha}=o(\Delta )$. 

We propose the following alternative construction. Starting from a $3$-regular graph $G$ with $N=104n$ vertices, we choose again $\alpha$ such that 
$N=\lfloor\frac{e^{\alpha}}{3^{\beta}}\rfloor=\frac{e^{\alpha}}{3^{\beta}}$. Now we call an integer $i\in\{4,\ldots\Delta\}$ {\sl critical} if 
both $i$ and $\lfloor\frac{e^{\alpha}}{i^{\beta}}\rfloor$ are odd. 
 
For those node degrees $i\in\{2,4,5,\ldots,\Delta\}$ which are {\sl non-critical}, we let $W_i$ be a wheel, i.e. a cycle consisting of multi-edges containing
all the degree $i$ nodes from $G'$ and having edge multiplicities alternating between $\lfloor\frac{i}{2}\rfloor$ and $\lceil\frac{i}{2}\rceil$.
In that case we have 
\[\mbox{MAX-CUT}(W_i)\:\: =\:\left\{\begin{array}{l@{\:\:}l}
 \frac{1}{2}\cdot i\cdot n_i & \mbox{if $i$ is even,}\\[0.667ex]
 \frac{n_i\cdot i}{2}-\left\lfloor\frac{i}{2}\right\rfloor & \mbox{if $i$ is odd and $n_i$ even.}
\end{array}\right.\] 
Now we deal with the {\sl critical node degrees}. Suppose that $i_1<i_2<\ldots <i_c$ are the critical degrees in $\{2,4,5,\ldots ,\Delta\}$.
We take a maximum matching $M_c$ on this set of indices such that without loss of generality, index $i_c$ is unmatched iff $c$ is odd. For each pair $i,j$ in this matching, 
we construct a subgraph $W_{i,j}$ containing $n_i$ nodes of degree $i$ and $n_j$ nodes of degree $j$. This subgraph is constructed as follows. We 
arrange all the degree $i$ nodes on a cycle consisting of multi-edges with multiplicities alternating between $\lfloor\frac{i}{2}\rfloor$ and $\lceil\frac{i}{2}\rceil$.
Let $v_i$ be the node which has now degree $i-1=2\cdot\lfloor\frac{i}{2}\rfloor$. All the other nodes on this cycle have already degree $i$.
Now we do the same for the $n_j$ nodes which are supposed to have degree $j$, and we define the vertex $v_j$ accordingly. 
Finally we add a single edge of multiplicity $1$ connecting $v_i$ and $v_j$.
In \autoref{fig:CyclicHardness} we show the construction of $W_{i,j}$.

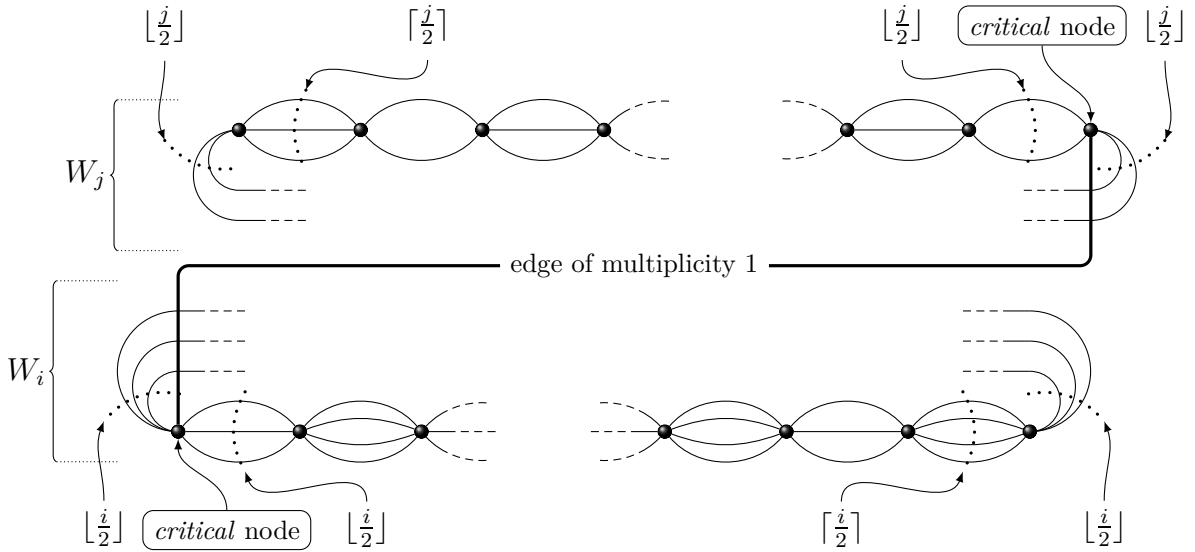
\begin{figure}[htb]
 \centering
 \begin{tikzpicture}[scale=1.6]

\draw (-1,1) arc (90:270:.5);
\draw (-1,1) edge[dash pattern=on 10pt off 2pt on 3pt off 2pt on 3pt off 2pt on 3pt off 100pt] +(1,0);
\draw (-1,.75) arc (90:270:.375);
\draw (-1,.75) edge[dash pattern=on 10pt off 2pt on 3pt off 2pt on 3pt off 2pt on 3pt off 100pt] +(1,0);
\draw (-1,.5) arc (90:270:.25);
\draw (-1,.5) edge[dash pattern=on 10pt off 2pt on 3pt off 2pt on 3pt off 2pt on 3pt off 100pt] +(1,0);

\draw (6,1) arc (90:-90:.5);
\draw (6,1) edge[dash pattern=on 10pt off 2pt on 3pt off 2pt on 3pt off 2pt on 3pt off 100pt] +(-1,0);
\draw (6,.75) arc (90:-90:.375);
\draw (6,.75) edge[dash pattern=on 10pt off 2pt on 3pt off 2pt on 3pt off 2pt on 3pt off 100pt] +(-1,0);
\draw (6,.5) arc (90:-90:.25);
\draw (6,.5) edge[dash pattern=on 10pt off 2pt on 3pt off 2pt on 3pt off 2pt on 3pt off 100pt] +(-1,0);

\foreach \x/\y in {-1/o,0/a,1/b,3/c,4/d,5/e,6/f}{
 \node[vertex] (\y) at (\x,0) {};
}

\coordinate (m) at (2,0);
 \draw (c) edge[bend left=50,dash pattern=on 10pt off 2pt on 3pt off 2pt on 3pt off 2pt on 3pt off 100pt] (m);
 \draw (b) edge[bend right=50,dash pattern=on 10pt off 2pt on 3pt off 2pt on 3pt off 2pt on 3pt off 100pt] (m);
 \draw (c) edge[bend right=50,dash pattern=on 10pt off 2pt on 3pt off 2pt on 3pt off 2pt on 3pt off 100pt] (m);
 \draw (b) edge[bend left=50,dash pattern=on 10pt off 2pt on 3pt off 2pt on 3pt off 2pt on 3pt off 100pt] (m);
 \draw (b) edge[dash pattern=on 10pt off 2pt on 3pt off 2pt on 3pt off 2pt on 3pt off 100pt] (m);
 \draw (c) edge[dash pattern=on 10pt off 2pt on 3pt off 2pt on 3pt off 2pt on 3pt off 100pt] (m);

\draw (a) -- (o);
\draw (d) -- (e);

\foreach \x/\y in {o/a,a/b,c/d,d/e,e/f}{
 \draw (\x) edge[bend left=50] (\y);
 \draw (\x) edge[bend right=50]  (\y);
}
\foreach \x/\y in {a/b,c/d,e/f}{
 \draw (\x) edge[bend left=20] (\y);
 \draw (\x) edge[bend right=20]  (\y);
}

\draw[very thick,dots,xshift=-1.3cm,yshift=.2cm,rotate=-70] +(0,.33) edge[bend right] node[pos=1](3){} +(0,-.33);
\draw[very thick,dots,xshift=-.45cm] +(0,.33) edge[bend right] node[pos=1](1){} +(0,-.33);
\draw[very thick,dots,xshift=6.3cm,yshift=.2cm,rotate=70] +(0,.33) edge[bend left] node[pos=1](4){} +(0,-.33);
\draw[very thick,dots,xshift=5.45cm] +(0,.33) edge[bend left] node[pos=1](2){} +(0,-.33);

\draw[] ($(1)+(1,-.25)$) edge[-latex,in=-70,out=90] node[pos=-0,yshift=-10] {$\lfloor\frac{i}{2}\rfloor$}(1.center);
\draw[] ($(2)+(-1,-.25)$) edge[-latex,in=-110,out=90] node[pos=-0,yshift=-10] {$\lceil\frac{i}{2}\rceil$}(2.center);
\draw[] ($(3)+(0,-.66)$) edge[-latex,in=-110,out=90] node[pos=-0,yshift=-10] {$\lfloor\frac{i}{2}\rfloor$}(3.center);
\draw[] ($(4)+(0,-.66)$) edge[-latex,in=-70,out=90] node[pos=-0,yshift=-10] {$\lfloor\frac{i}{2}\rfloor$}(4.center);

\draw[] ($(o)+(0.4,-.65)$) edge[-latex,in=-90,out=90] node[draw,rounded corners,below,pos=-0] {\footnotesize\emph{critical} node}(o);

\begin{scope}[xshift=.5cm,yshift=2.5cm]

\draw (-1,0) arc (90:270:.375);
\draw (-1,-.75) edge[dash pattern=on 10pt off 2pt on 3pt off 2pt on 3pt off 2pt on 3pt off 100pt] +(1,0);
\draw (-1,0) arc (90:270:.25);
\draw (-1,-.5) edge[dash pattern=on 10pt off 2pt on 3pt off 2pt on 3pt off 2pt on 3pt off 100pt] +(1,0);

\draw (6,0) arc (90:-90:.375);
\draw (6,-.75) edge[dash pattern=on 10pt off 2pt on 3pt off 2pt on 3pt off 2pt on 3pt off 100pt] +(-1,0);
\draw (6,0) arc (90:-90:.25);
\draw (6,-.5) edge[dash pattern=on 10pt off 2pt on 3pt off 2pt on 3pt off 2pt on 3pt off 100pt] +(-1,0);

\foreach \x/\y in {-1/o',0/a',1/b',2/c',4/d',5/e',6/f'}{
 \node[vertex] (\y) at (\x,0) {};
}

\draw (o') -- (a');
\draw (b') -- (c');
\draw (d') -- (e');

\foreach \x/\y in {o'/a',a'/b',b'/c',d'/e',e'/f'}{
 \draw (\x) edge[bend left=50] (\y);
 \draw (\x) edge[bend right=50] (\y);
}

\coordinate (m') at (3,0);
 \draw (c') edge[bend left=50,dash pattern=on 10pt off 2pt on 3pt off 2pt on 3pt off 2pt on 3pt off 100pt] (m');
 \draw (d') edge[bend right=50,dash pattern=on 10pt off 2pt on 3pt off 2pt on 3pt off 2pt on 3pt off 100pt] (m');
 \draw (c') edge[bend right=50,dash pattern=on 10pt off 2pt on 3pt off 2pt on 3pt off 2pt on 3pt off 100pt] (m');
 \draw (d') edge[bend left=50,dash pattern=on 10pt off 2pt on 3pt off 2pt on 3pt off 2pt on 3pt off 100pt] (m');

\draw[very thick,dots,xshift=-1.3cm,yshift=-.2cm,rotate=70] +(0,.33) edge[bend right] node[pos=0](3'){} +(0,-.33);
\draw[very thick,dots,xshift=-.45cm] +(0,.33) edge[bend right] node[pos=0](1'){} +(0,-.33);
\draw[very thick,dots,xshift=6.3cm,yshift=-.2cm,rotate=-70] +(0,.33) edge[bend left] node[pos=0](4'){} +(0,-.33);
\draw[very thick,dots,xshift=5.45cm] +(0,.33) edge[bend left] node[pos=0](2'){} +(0,-.33);

\draw[] ($(1')+(1,.25)$) edge[-latex,in=70,out=-90] node[above=12pt,pos=-0,yshift=-10] {$\lceil\frac{j}{2}\rceil$}(1'.center);
\draw[] ($(2')+(-1,.25)$) edge[-latex,in=110,out=-90] node[above=12pt,pos=-0,yshift=-10] {$\lfloor\frac{j}{2}\rfloor$}(2'.center);
\draw[] ($(3')+(0,.66)$) edge[-latex,in=110,out=-90] node[above=12pt,pos=-0,yshift=-10] {$\lfloor\frac{j}{2}\rfloor$}(3'.center);
\draw[] ($(4')+(0,.66)$) edge[-latex,in=70,out=-90] node[above=12pt,pos=-0,yshift=-10] {$\lfloor\frac{j}{2}\rfloor$}(4'.center);

\draw[] ($(f')+(-0.4,.7)$) edge[-latex,in=90,out=-90] node[draw,rounded corners,above=0pt,pos=-0] {\footnotesize\emph{critical} node}(f');

\draw[decorate,decoration={brace,mirror}] (-2,.25) -- node[left]{$W_{j}$} (-2,-1);
\draw[densely dotted] (-2,-1) -- (-1.5,-1);
\draw[densely dotted] (-2,.25) -- (-1.5,.25);

\end{scope}

\draw[very thick,rounded corners] (o) -- +(0,1.375) -| node[fill=white,pos=0.25]{\footnotesize edge of multiplicity $1$} (f');
\draw[decorate,decoration={brace}] (-2,-.25) -- node[left]{$W_{i}$} (-2,1.25);
\draw[densely dotted] (-2,1.25) -- (-1.5,1.25);
\draw[densely dotted] (-2,-.25) -- (-1.5,-.25);
\end{tikzpicture}
 \caption{The construction of $W_{i,j}$.}
 \label{fig:CyclicHardness}
\end{figure}

Directly from this construction we obtain that the subgraph $W_{i,j}$ contains $n_i$ nodes of degree $i$ and $n_j$ nodes of degree $j$. Note that from the very 
definition of critical degrees it follows that both $n_i$ and $n_j$ are odd. Every cut in $W_{i,j}$ misses at least one multi-edge inside the degree $i$ nodes and one 
such edge inside the degree $j$ nodes. On the other hand, we can easily construct such a cut which misses only one edge of multiplicity $\lfloor\frac{i}{2}\rfloor$
and one edge of multiplicity $\lfloor\frac{j}{2}\rfloor$, by placing the degree $i$ nodes alternatingly on the left and right hand side of the cut and doing the same 
for the degree $j$ vertices. Thus we have
\[\mbox{MAX-CUT}(W_{i,j})\:\: =\:\: \frac{i\cdot n_i}{2}-\left\lfloor\frac{i}{2}\right\rfloor
                                 \: +  \frac{j\cdot n_j}{2}-\left\lfloor\frac{j}{2}\right\rfloor\: +1\]
Finally if $c$ is odd, then $W_{i_c}$ contains all the degree-$i_c$ nodes and one node of degree $1$, and the maximum cut size is
$\mbox{MAX-CUT}(W_{i_c})= \frac{i_c\cdot n_{i_c}}{2}-\lfloor\frac{i}{2}\rfloor +1$. 

In order to keep notation simple, we let $W_{i_c}=\emptyset$ in case when the number of {\sl critical} node degrees is even. Moreover we let 
$J\subset\{2,4,\ldots , \Delta\}$ be the set of critical degrees.
Thus the map $G\mapsto G'=G\cup\bigcup_{i\not\in J}W_i\cup\bigcup_{\{i,j\}\in M_c}W_{i,j}\cup W_{i_c}\cup M$ 
reduces the question if a given $3$-regular graph $G$ with $104n$ vertices has a 
cut of size at least $(152-\epsilon )n$ to the question if the $(\alpha,\beta )$-power law graph $G'$ has a cut of size at least\\[1.1ex]
$(152-\epsilon )n+\sum\limits_{i\in ([2,\Delta]\setminus J)\cup\{i_c\}}\mbox{MAX-CUT}(W_i)
\: +\sum\limits_{\{i,j\}\in M_c}\mbox{MAX-CUT}(W_{i,j})\: +\mbox{MAX-CUT}(M)$\\[1.1ex]
The same construction also works in the case when $\beta =1$.
We obtain the following result.
\begin{theorem}
For every $\beta\in \left (0,1\right ]$, the MAX-CUT problem in $(\alpha,\beta )$-Power Law Graphs is NP-hard.
\end{theorem}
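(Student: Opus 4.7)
The plan is to complete the polynomial-time many-one reduction already sketched in the preceding paragraphs from E3-MAX-CUT to MAX-CUT on $(\alpha,\beta)$-PLGs for $\beta\in(0,1]$, and then invoke Theorem~\ref{bk_theorem}. Three tasks remain: (a)~verify that the multigraph $G' = G \cup \bigcup_{i\notin J} W_i \cup \bigcup_{\{i,j\}\in M_c} W_{i,j} \cup W_{i_c} \cup M$ really realizes the prescribed $(\alpha,\beta)$-PLG degree sequence; (b)~check well-definedness in the $\beta\leq 1$ regime, where the degree-$1$ budget is tighter; and (c)~confirm additivity of the cut sizes.

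For (a), I would inspect the construction component by component. The copy of $G$ supplies the $N = \lfloor e^\alpha/3^\beta\rfloor$ vertices of degree~$3$. For each non-critical $i\in\{2,4,5,\ldots,\Delta\}$, the multi-edge cycle $W_i$ on $n_i=\lfloor e^\alpha/i^\beta\rfloor$ vertices with multiplicities alternating between $\lfloor i/2\rfloor$ and $\lceil i/2\rceil$ gives every vertex degree exactly $i$; the definition of non-critical (either $i$ even, or $n_i$ even) is precisely what guarantees the alternating pattern closes up consistently around the cycle. For each critical pair $\{i,j\}\in M_c$, the single cross-edge joining the two deficient cycle vertices $v_i, v_j$ restores them to their target degrees, and an analogous adjustment via one degree-$1$ vertex handles the (at most one) unmatched critical degree $i_c$. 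Finally $M$ is a near-perfect matching on the remaining degree-$1$ vertices, absorbing the last $\lfloor e^\alpha\rfloor$ many nodes up to an $O(1)$ parity correction.

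For (b), the key observation is that the cyclic construction needs only $O(1)$ degree-$1$ vertices in total (one for $W_{i_c}$ plus parity fix-ups), in contrast to the path-based construction for $\beta>1$ which required $\Theta(i)$ endpoints per $W_i$. Since $\lfloor e^\alpha\rfloor$ degree-$1$ vertices are available and dominate this $O(1)$ demand, the matching $M$ is easily formed on the leftover. For (c), since $G'$ is a disjoint union of $G$ and the auxiliary subgraphs,
\[
\mbox{MAX-CUT}(G') \;=\; \mbox{MAX-CUT}(G) \;+\; C_{\alpha,\beta},
\]
where $C_{\alpha,\beta}$ is the polynomial-time computable sum of the MAX-CUT values of the auxiliary components, given by the explicit formulas for $\mbox{MAX-CUT}(W_i)$, $\mbox{MAX-CUT}(W_{i,j})$, $\mbox{MAX-CUT}(W_{i_c})$ and $|E(M)|$ established above. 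Consequently, deciding $\mbox{MAX-CUT}(G)\geq (152-\epsilon)n$ is polynomial-time equivalent to deciding $\mbox{MAX-CUT}(G')\geq C_{\alpha,\beta}+(152-\epsilon)n$, and the NP-hardness transfers via Theorem~\ref{bk_theorem}.

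The main obstacle I anticipate is the detailed parity bookkeeping required to ensure that the alternating cycles, the matching $M_c$ on critical indices, and the leftover matching $M$ can all be realized simultaneously for every admissible $\alpha$; these are local combinatorial adjustments, each at worst an $O(1)$ additive correction absorbed into $C_{\alpha,\beta}$, and so they do not threaten the reduction. The same construction transfers unchanged to $\beta=1$, since the only property used is that $\lfloor e^\alpha\rfloor$ dominates the $O(1)$ demand on degree-$1$ vertices, giving the full range $\beta\in(0,1]$.
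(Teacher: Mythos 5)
Your proposal takes essentially the same route as the paper's own argument: it completes the cyclic (wheel) construction with the gadgets $W_i$ for non-critical degrees, the paired gadgets $W_{i,j}$ for critical degrees matched by $M_c$, the single degree-$1$ fix-up $W_{i_c}$, and the leftover matching $M$, then uses additivity of MAX-CUT over the disjoint components together with the explicit polynomial-time computable cut values to transfer NP-hardness from Theorem~\ref{bk_theorem}. Your key observation in (b) --- that the cyclic construction consumes only $O(1)$ degree-$1$ vertices, which is what rescues the reduction when $\beta\leq 1$ --- is exactly the point the paper makes in motivating this alternative to the path-based construction, so the proposal is correct and matches the paper's proof.
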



\section{Further Research}
We prove some new results on approximability of MAX-CUT 
in Power Law Graphs. 
It remains an open problem to settle the status of MAX-CUT in PLGs for the power law exponent $\beta =2$, where the phase transition happens from 
existence of a PTAS (for constant $\beta <2$ and for functional $\beta$ slowly converging to $2$ from below) to APX-hardness (for any constant $\beta >2$). 
Another problem concerns the design of better constant factor approximation algorithms for MAX-CUT in the case $\beta >2$, based on SDP simulations with 
appropriate classes of inequalities for the low-degree vertices in the power law graph.
Similar methods can be applied to other partition problems on power law graphs like MAX-BISECTION, Multiway-CUT and $k$-partition problems.
Establishing good approximability bounds for those problems is another interesting question.

\printbibliography


%

\end{document}